\newtheorem{theorem}{Theorem}[section]
\theoremstyle{definition}
\title{Neumann Series-based Neural Operator for Solving Inverse Medium Problem}
\author{
Ziyang Liu \\
  Department of Mathematical Sciences\\
  Tsinghua University\\
  Beijing, 100084, China \\
  \texttt{zy-iu20@mails.tsinghua.edu.cn} \\
   \And
Fukai Chen \\
  Department of Mathematical Sciences\\
  Tsinghua University\\
  Beijing, 100084, China \\
  \texttt{cfk19@mails.tsinghua.edu.cn} \\
   \And
Junqing Chen \\
  Department of Mathematical Sciences\\
  Tsinghua University\\
  Beijing, 100084, China \\
  \texttt{jqchen@tsinghua.edu.cn} \\
   \And
Lingyun Qiu \\
  Yau Mathematical Sciences Center\\
  Tsinghua University\\
  Beijing, 100084, China \\
  \texttt{lyqiu@tsinghua.edu.cn} \\
   \And
Zuoqiang Shi \\
  Yau Mathematical Sciences Center\\
  Tsinghua University\\
  Beijing, 100084, China \\
  \texttt{zqshi@tsinghua.edu.cn} \\
   \And
}
\date{}
\begin{document}
\maketitle
\begin{abstract}
The inverse medium problem, inherently ill-posed and nonlinear, presents significant computational challenges. This study introduces a novel approach by integrating a Neumann series structure within a neural network framework to effectively handle multiparameter inputs. Experiments demonstrate that our methodology not only accelerates computations but also significantly enhances generalization performance, even with varying scattering properties and noisy data. The robustness and adaptability of our framework provide crucial insights and methodologies, extending its applicability to a broad spectrum of scattering problems. These advancements mark a significant step forward in the field, offering a scalable solution to traditionally complex inverse problems.
\end{abstract}


\section{Introduction}
\label{introduction}

\hspace{2em}The inverse medium problem is a significant mathematical challenge in physics and engineering. It involves determining the internal characteristics of a medium by observing wave scattering, including light, sound, and other electromagnetic waves \cite{colton2019inverse, ito2012direct, bao2005inverse1}. This problem is essential in fields where direct measurement is impractical. In medical diagnostics, techniques such as ultrasound imaging \cite{carevic2018solving} and MRI \cite{spencer2020tutorial} enable non-invasive visualization of internal body structures for accurate diagnosis. In geophysical exploration \cite{virieux2009overview, fichtner2013multiscale}, analysis of seismic wave scattering facilitates the detection of subsurface resources, such as oil and minerals.

\begin{figure*}[h!]
	\centering
	\includegraphics[width=0.3\textwidth]{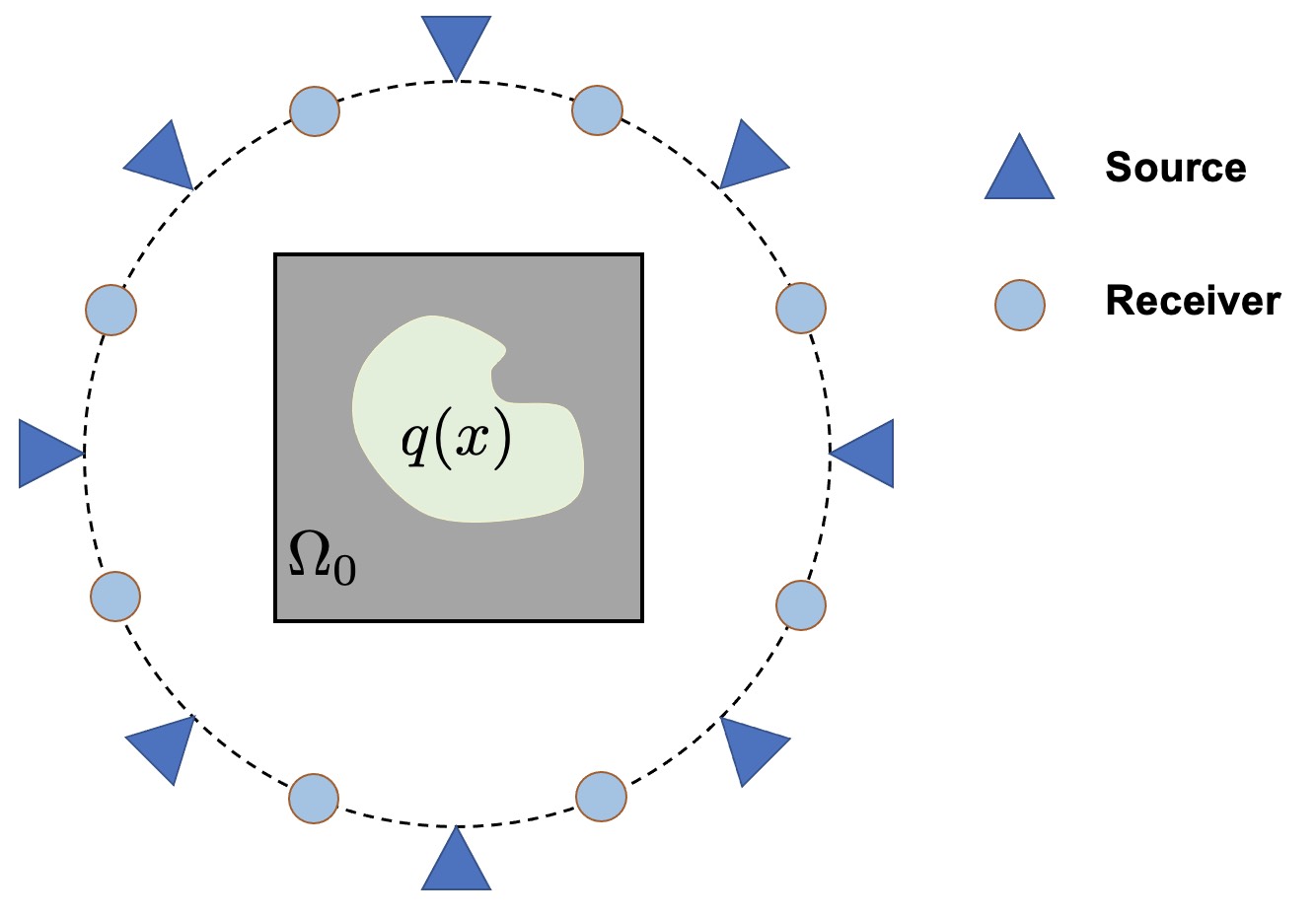}
	\caption{Schematic diagram of the inverse medium problem. }
	\label{Inverse}
\end{figure*}
This article focuses on a specific case of the two-dimensional inverse medium problem in penetrable media. For simplicity, and without loss of generality, we assume the incident field is a plane wave, represented by $u^i(x) = \exp(ik x\cdot d)$, where $k$ and $d$ denote the wavenumber and the direction of incidence, respectively. As depicted in Figure \ref{Inverse}, the perturbation of the refractive index, $q(x)$, also referred to as the scatterer, is compactly supported within the medium domain $\Omega_0$. Plane wave sources from various directions penetrate the medium, and receivers are uniformly positioned along the circular boundary to record the scattered field data. The inverse problem seeks to retrieve the parameter $q$ from near-field measurements \cite{bao2007inverse}. 

As an indispensable component for understanding the scattering mechanisms, the forward problem is defined as follows:

\begin{equation}
\label{forward}
\left\{	
\begin{aligned}
& \Delta u^s+k^2 (1+q) u^s= - k^2 qu^i \quad \mbox{in} \; \mathbb{R}^2, \\
& \lim _{r \rightarrow \infty} \sqrt{r}\left(\frac{\partial u^s}{\partial r}-i k u^s\right)=0.
\end{aligned}
\right .	
\end{equation}
Colton et al.\cite{colton2019inverse} have proven the existence and uniqueness of solution for the forward equation. Considering that $u^s$ linearly depends on the right-hand side term $qu^i$, the forward map can be defined as:

\begin{equation}
\label{forward map}
u^s = \mathcal{S}(q)(qu^i),
\end{equation}
where $\mathcal{S}$ is nonlinear with respect to $q$. 

Due to the inherent challenges posed by nonlinearity and unbounded domains, solving forward problems requires sophisticated numerical approaches. Among these, the imposition of artificial boundary conditions, such as the Perfectly Matched Layer (PML) \cite{berenger1994perfectly,johnson2021notes,cl2005} and Absorbing Boundary Condition (ABC) \cite{higdon1987numerical,liu2018improved}, is widely adopted. The forward problem is first transformed into a manageable problem in a bounded domain and then solved by the Finite Difference Method (FDM) and Finite Element Method (FEM). Additionally, Vainikko \cite{vainikko2000fast} and Hohage \cite{hohage2006fast} addressed the problem by solving the equivalent Lippmann–Schwinger integration equation. Kirsch and Monk \cite{1994An} advanced this field by coupling the FEM with the boundary integral method. Colton and Kress summarize the integral equation methods in \cite{colton2013integral}.

Compared with forward problems, inverse medium problems are more ill-posed and require robust numerical solver. The decomposition method \cite{colton1988inverse,colton1991approximation} and the direct sampling method \cite{cakoni2002linear} are employed for rapid recovery. For detailed reconstructions, the most commonly employed method involves iterative reconstruction techniques, which address the following optimization problem:

\begin{equation}
\min_{q} J(q) = \sum_{j=1}^{M} J_j(q) = \sum_{j=1}^{M} \frac{1}{2}\| T \mathcal{S}(q)(q u_j^i) - \bm{d_j}\|_2^2,
\label{inverse_set}
\end{equation}
where  $\bm{d_j}\in \mathbb{C}^N,j=1,2,\ldots,M$ is the collected scattered data, while $M$ and $N$ denote the number of sources and receivers, respectively. Minimizing $J(q)$ aims to adjust the scatterer distribution such that the simulated data aligns with measurements. A significant limitation, however, is the need to solve the forward problems in each iteration, which considerably increases computational overhead. Furthermore, in practical applications, the use of high-order optimization methods is crucial for faster convergence. These optimization methods always depend on the gradient of $J(q)$, typically obtained by using the adjoint state method. Nevertheless, the demand of solving an additional adjoint state equation further increases the computational costs. It is noteworthy that \eqref{inverse_set} is defined for a single wavenumber, which establishes the foundation for formulating objective functions in multi-frequency measurement scenarios \cite{bao2015inverse}. For simplicity 
the standard regularization term typically included in such optimization problems has been omitted.

Recent advances have leveraged machine learning techniques to address PDE-based physical problems. The most straightforward method adopts an end-to-end learning approach, establishing direct correlations between collected data and input parameters. For instance, SwitchNet \cite{khoo2019switchnet} employs a low-rank representation to solve the far-field operator, and InversionNet \cite{wu2019inversionnet} utilizes a convolutional neural network (CNN) to directly derive the seismic inversion operator. However, many of these approaches treat inverse problems as entirely black-box processes, overlooking the role of PDEs(namely, the physical background). Consequently, these methods often lack interpretability, generalization, and robustness. To address the stability issue, recent studies \cite{guo2019supervised, rasht2022physics} have proposed to integrate the aforementioned optimization framework and neural network method, while merely employing neural networks as an alternative solver for the forward problem. This methodology allows the training stage to function as an offline process, enabling the online optimization framework to execute effectively with an approximated solution provided by the networks. Another significant advantage is that automatic differentiation \cite{paszke2017automatic} with neural networks can avoid the need to derive adjoint equations. For a more detailed summary of existing neural network methods, please refer to \cite{chen2020review}.

Taking \eqref{inverse_set} as an example, solving PDE-based inverse problems within the optimization framework necessitates solving a series of forward PDEs for varying parameter inputs. From a mathematical perspective, we aim for neural networks to approximate a mapping from parameter space to solution space, a process termed operator learning \cite{boulle2023mathematical} for parametric PDEs \cite{huang2022meta}. This task is typically nonlinear and ill-posed. Deep Operator Network (DeepONet) \cite{lu2021learning} addresses this issue by introducing branch and trunk nets to capture the low-rank relationships between input and output functions. In Fourier Neural Operator  \cite{li2020fourier} (FNO), the Fourier transformation allows the network to learn a parametrized form of the kernel in the spectral domain by applying pointwise multiplication in Fourier space. Deep Green Network (DGN) \cite{xu2021survey} directly learns the kernel in physical space. Beyond directly learning mappings between function spaces, another strategy employs Physics-Informed Neural Network (PINN) \cite{karniadakis2021physics, li2023physicsinformed} to learn solutions at a fixed set of parameters as a preliminary step. Subsequently, these solutions are refined through transfer learning \cite{goswami2022deep} or meta-learning \cite{huang2022meta} techniques.

Returning to the physics problem addressed in this article, it should be noted that the parameters of the forward problem originate from two distinct spaces. Current neural network methodologies typically interpret multi-parameters as multidimensional stacked channels, thereby neglecting their inherent physical distinctions. Moreover, the nonlinear characteristics of these parameters pose challenges to the generalization capabilities of these networks and complicate the achievement of satisfactory accuracy and robustness. In \cite{chen2024nsno}, we addressed this issue by employing the Neumann series, a skip connection method \cite{gilton2019neumann} commonly used to decouple multi-parameter inputs. The main contributions of this paper are summarized as follows:
\begin{itemize}

\item The mathematical foundations of the Neumann series embedding are analyzed in detail. Our previously proposed Neumann Series Neural Operator (NSNO) has been refined and designated as
a new explicit approach. 
\item Extensive numerical experiments validate the efficacy of our methods. In various practical scenarios, our approach significantly outperforms non-embedded networks, 
and the computational speed is 8 to 20 times comparing to the existing methods.
\item Robust generalization capabilities are demonstrated against out-of-distribution parameters, and the enhancement provided by the Neumann series proves to be compatible with numerous existing neural networks .
\end{itemize}

The structure of this paper is as follows. In Section 2, we provides an overview of the Neumann series. In Section 3, the integration about Neumann series and neural networks is discussed. Section 4 details experiments on network performance for forward problems. Section 5 exhibits the results of inverse reconstruction experiments. 

Finally, in Section 6 we draw some conclusion, discuss the findings and outline some future research directions.

\section{Neumann series}

\hspace{2em}The Neumann series is an important concept in functional analysis and linear algebra. It refers to a series used primarily for finding the inverse of a matrix or operator, especially in contexts where direct inversion is complex or not feasible. It is analogous to the geometric series in linear algebra and can be expressed in the form 
\begin{equation}
    (I-A)^{-1} = I+A + A^2+A^3+\dots,
\label{NS_base}
\end{equation}
where $A$ is a matrix or linear operator and $I$ is the identity. This series converges under the condition that the norm of $A$ is less than one. In this section, we explore the application of the Neumann series to devise an iterative solution format for the Helmholtz equation. 

We start with the Helmholtz equation with constant coefficients. Assume $\Omega$ is a bounded domain in $\mathbb{R}^2$, and the function $f$ is supported in $\Omega$, we consider

\begin{equation}\label{Simple}
\left\{
\begin{aligned}
& \Delta u+k^2 u= - k^2 f  \quad \text{in}\; \mathbb{R}^2\\
& \lim _{r \rightarrow \infty} \sqrt{r}\left(\frac{\partial u}{\partial r}-i k u\right)=0.
\end{aligned}
\right. 
\end{equation}
It is known that the solution to this equation can be represented by the following integral equation,
\begin{equation}
\begin{aligned}
u = \hat{\mathcal{S}}(f) &= -k^2\int_{\Omega} G(x,y) f(y) \, dy, \quad \forall x \in \mathbb{R}^2,\\
\text{where} \quad G(x,y) &= -\frac{i}{4} H_0^{(1)}(k|x-y|), \quad x\neq y,\,x,y\in \mathbb{R}^2,
\end{aligned}
\label{integration}
\end{equation}
and $H_0^{(1)}$ denotes the Hankel function of the first kind of order zero. This equation guarantees the existence and uniqueness of the solution. Similarly, the forward equation has an equivalent representation known as the Lippmann-Schwinger equation,

\begin{equation*}
	u^s(x) + k^2 \int_{\Omega} G(x,y) q(y) u^s(y) \, dy = -k^2 \int_{\Omega} G(x,y) q(y)u^i(y) \, dy.
\end{equation*}

By leveraging the operator $\hat{\mathcal{S}}$ defined in the integral equation \eqref{integration}, we can reformulate the integral equation as
\begin{equation}
	u^s = \hat{\mathcal{S}}(qu^s + qu^i),  \Rightarrow u^s = (I - \hat{\mathcal{S}}(q \cdot))^{-1}\hat{\mathcal{S}}(qu^i),
\label{LS}
\end{equation}
if $I-\hat{\mathcal{S}}(q\cdot)$ is invertible, 
where $\hat{\mathcal{S}}(q\cdot)$ performs the role of linear operator. By substituting \eqref{NS_base} into \eqref{LS}, we can derive that

\begin{equation}
\label{derive}
\begin{aligned}
	u^s &= \hat{\mathcal{S}}(qu^s + qu^i) \\
	&= \hat{\mathcal{S}}(qu^i) + \hat{\mathcal{S}}(q(\hat{\mathcal{S}}(qu^s+qu^i))\\
	&= \hat{\mathcal{S}}(qu^i) + \hat{\mathcal{S}}(q(\hat{\mathcal{S}}(qu^i)) + \hat{\mathcal{S}}(q(\hat{\mathcal{S}}(q(\hat{\mathcal{S}}(qu^s + qu^i))))) =  \dots,
\end{aligned}
\end{equation}
Thus, the nonlinearity can be handled by separate iteration subprocesses which involves the computation of the same linear operator,
\begin{equation}
\label{NS}
u^{(0)} = u^i, \quad u^{(j+1)} \triangleq \hat{\mathcal{S}}(qu^{(j)}), j=0,1,2,\cdots,
\end{equation}
As a result, we can use finite summation $U^{(L)} \triangleq \sum_{j = 1}^L u^{(j)}$ as an approximated solution to the forward problems \eqref{forward}.

Comparing with the original Neumann Series \eqref{NS_base}, the operator $\hat{\mathcal{S}} (q\cdot )$ plays a role similar to $A$. Therefore, the convergence of the series in \eqref{derive} is equivalent to its norm being less than one. Bao et al. \cite{bao2000regularity}  have shown that the operator $\hat{\mathcal{S}}$ is a bounded linear operator from $L^2(\Omega)$ to $L^2(\Omega)$. Suppose we have 
\[
\| \hat{\mathcal{S}} \|_{L^2{(\Omega)}} \leq C_1(k,\Omega),
\]
when $\| q\|_{L^\infty(\Omega)} < 1/(2C_1(k,\Omega)$, we can derive that, $\forall f \in {L^2(\Omega)}$,

\[
	\|\hat{\mathcal{S}} (qf) \|_{L^2(\Omega)} \leq C_1(k,\Omega) \| qf \|_{L^2(\Omega)}\leq C_1(k,\Omega) \| q\|_{L^\infty(\Omega)}\| f\|_{L^2(\Omega)} < \|f\|_{L^2(\Omega)}.
\]
Hence, the norm of the operator is confined within 1, which implies that the series $\{U^{(L)}\}_{N=0}^{\infty}$ converges exclusively to $u^s$. Although convergence of the corresponding Neumann series is only proven under conditions of weak scattering, this approach effectively transforms a nonlinear equation into a series of linear equations. This transformation not only preserves the underlying structure but also provides valuable insights into parameter embedding. And it inspire us to propose the neural network structure to solve the Helmholtz equation in the next sections.

\section{Network architecture}
\label{Forward Solver Architecture}

\hspace{2em}In forward problem \eqref{forward}, the scatterer $q(x)$ and the incident wave $u^i$ lie in different function spaces. The Neumann series provides a robust framework for decoupling the two parameters and linearizing the dependence on the parameters. On the one hand, globally mapping the scatterer information into each iterative step has been proven to significantly enhance the accuracy of networks solving the forward problem. On the other hand, the structure of the Neumann series, functioning as a series of linear mappings, offers a guiding rationale for implementing each level with a similarly structured subnetwork. We design two approaches to integrate the Neumann series into neural network architectures which will be detailed in subsequent sections.

\subsection{Implicit approach}
\label{Implicit approach}
\hspace{2em}We aim to train a neural operator that maps from the scatterer and incident wave function spaces to the scattered wave function space.
Based on our previous work \cite{chen2024nsno}, the network's architecture, depicted in Figure \ref{Implicit}(a), employs an implicit embedding approach analogous to a truncated Neumann series with $L$ terms truncation. We introduced a series of $L$ distinct subnetworks, $\{\mathcal{N}_{\theta_1}, \mathcal{N}_{\theta_2}, \cdots, \mathcal{N}_{\theta_L}\}$, each equipped with learnable parameters $\{\theta_1, \cdots, \theta_L\}$, collectively denoted as $\boldsymbol{\theta}$. This structure effectively positions the inputs $q(x)$ and the incident wave $u^i$ within different components of the network. Information from the scatterer undergoes a Hadamard product operation with the output from one subnetwork before entering the next, mirroring the process of the Neumann series, as described in \eqref{NS}. The outputs of the subnetworks are subsequently aggregated to produce the final network output.

Theoretically, if the Neumann series converges, the norms of inputs for each equation should gradually decrease. Inspired by this intuition, the implicit network architecture may experience a similar trend. To enhance the training stability for inputs across different scales and preserve the linearity of the target operator $\hat{\mathcal{S}}$, we provide a normalization strategy. As depicted in Figure \ref{Implicit}(b), the inputs are scaled by their $L^{\infty}$ norm before being processed by the subnetwork, and the output is subsequently rescaled to match the recorded magnitude. 

\begin{figure}[h!]
\centering
	\includegraphics[width=0.8\textwidth]{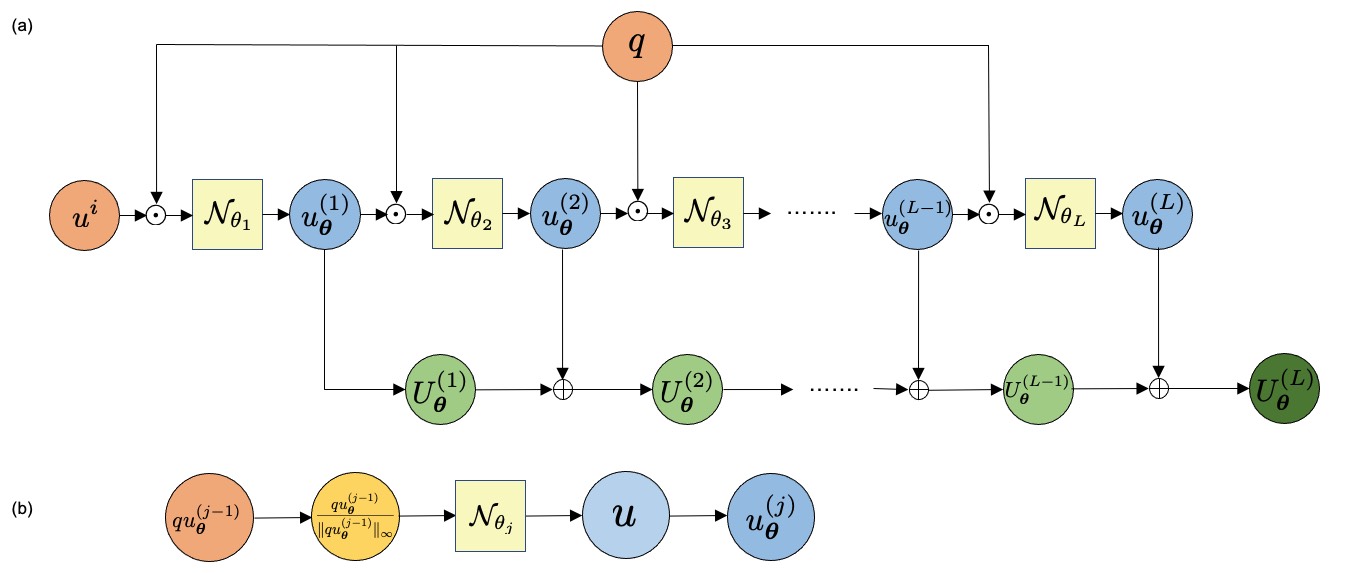}
	\caption{Network structure for the implicit approach. \textnormal{(a) $\odot$ refers to the element-by-element multiplication and $\oplus$ denotes the element-wise addition.} (b) detail structure of the normalizing operation. $u^i = u_{\boldsymbol{\theta}}^{(0)}, u_{\boldsymbol{\theta}}^{(j)} = u\| qu_{\boldsymbol{\theta}}^{(j-1)}\|_{\infty}, j = 1,2,\cdots,L$.} 
	\label{Implicit}
\end{figure}
 
To enhance the network's approximation capabilities, we designate broadly applicable neural operators, such as FNO \cite{li2020fourier}, CNO \cite{raonic2023convolutional} and DeepONet \cite{lu2021learning}, as the subnetworks in Figure \ref{Implicit}(a). These subnetworks are initialized with the same network configuration and may further employ a single subnetwork that shares parameters across different stages. While this approach might sacrifice some expressive power, it aligns more closely with mathematical property consistency and simplifies the learning process by reducing complexity. This parameter-sharing method will be proven to lead to more robust learning, thereby enhance the network’s ability to model complex phenomena accurately.

The implicit approach processes inputs of scatterer $q(x)$ and incident wave $u^i$ to directly output the scattered wave, primarily focusing on matching the final output with the ground truth rather than on the operational details of each subnetwork. All parameters are learned simultaneously. The Neumann series acts more as a guideline for network design than as a rigid protocol. This training approach has proven to enhance the capacity to solve practical problems effectively. Our previous research \cite{chen2024nsno} shows that the implicit approach offers a more accurate approximation than the Neumann series truncated at the same truncation number $L$. Additionally, the original Neumann series often demonstrates slow convergence or even divergence when faced with extremely high wavenumbers or large scatterer scales or magnitudes. In contrast, the implicit approach consistently yields superior performance.

\subsection{Explicit approach}
\label{Explicit approach}

\hspace{2em}The implicit approach sums the outputs of multi-layer subnetworks to produce the final output, focusing solely on achieving closeness to the ground truth without enforcing strict execution of each operator in the Neumann series by every subnetwork. Conversely, the explicit approach trains subnetworks to approximate each linear operator in the Neumann series, as discussed in this section.

As illustrated in Figure \ref{Explicit}(a), each subnetwork utilizes the input-output pairs ${(qu^{(j-1)}, u^{(j)})}_{j=1}^L$ from each equation of the Neumann series as training data. Considering the linear nature of the target operator $\hat{\mathcal{S}}$, employing a consistent network structure for these subnetworks, $\{\mathcal{N}_{{\theta}_j}\}, j=1, 2, \dots,$ is rational. Furthermore, as depicted in Figure \ref{Explicit}(b), a single unified network can approximate the operator, utilizing normalization operations from the implicit approach to accommodate inputs from different function spaces with variations in magnitude. This configuration also implies that multiple subnetworks share parameters, thereby maintaining consistency with their mathematical properties. Our experiments demonstrate that using only the initial equation pair ($qu^i,u^{(1)}$) can still yield excellent results. Data from subsequent equations can be employed to fine-tune the unified network.

\begin{figure}[h!]
\centering
	\includegraphics[width=0.8\textwidth]{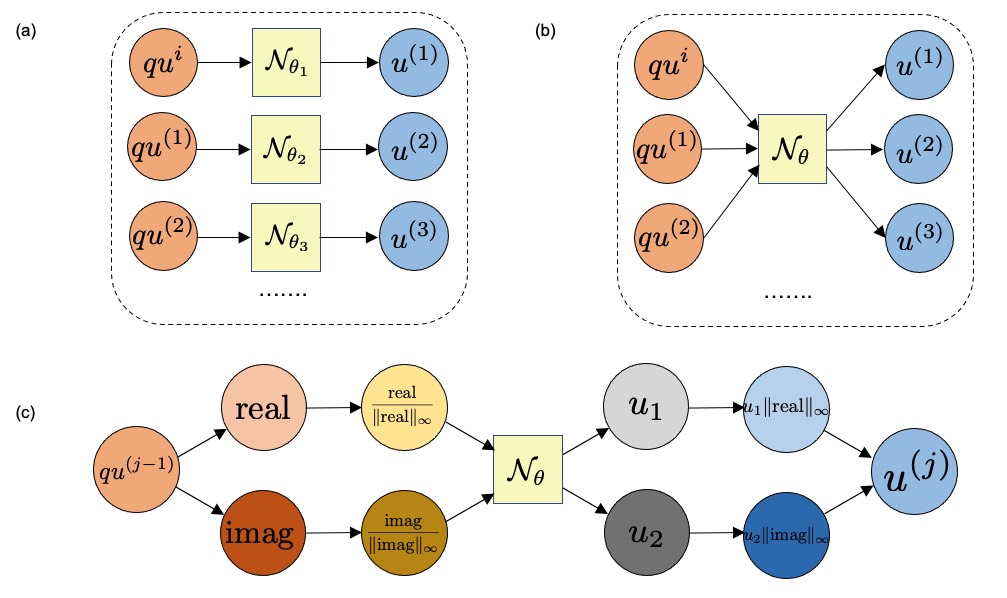}
	\caption{Training process for the explicit approach. \textnormal{$u_1,u_2$ are complex-valued network outputs, respectively,  $u^{(j)} = u_1 \| \text{real}\|_{\infty} + iu_2 \| \text{imag}\|_{\infty}, j = 1,2,\cdots,L, u^{(0)} = u^i.$}}
	\label{Explicit}
\end{figure}

Unlike the implicit approach, the explicit approach involves assembling trained subnetworks to calculate scattered waves based on the Neumann series. Although this linear looping operation reduces the potential for parallelism, it allows for the distinct separation of subnetwork training and assembly methods. On the one hand, this separation permits a focused effort on enhancing the precision of subnetworks in approximating the target operator properties. As depicted in Figure \ref{Explicit}(c), the real and imaginary parts of the inputs are processed separately by the network, producing complex-valued outputs that are subsequently recombined to form the final output. On the other hand, these finely trained subnetworks can be assembled not only through the Neumann series but also through any iterative method involving a common linear operator component. Theoretically, when the target series shows poor convergence or diverges, its counterpart, the explicit approach, also performs poorly. Thus, opting for more stable series can enhance overall computational accuracy. As a result, the explicit approach holds promise for achieving a customized balance between computational efficiency and precision.

\section{Simulation results for forward training}
\label{forward results}

\hspace{2em}To evaluate the effectiveness of Neumann series embedding, we choose FNO \cite{li2020fourier} as the baseline subnetwork, which has shown strong approximation ability in several two-dimensional parametric PDE problems. Below is a detailed explanation of the comparative methods covered in the paper.

\begin{itemize}
	\item \textbf{FNO:} We employ the original FNO to simply regard the input of the scatterer and the incident field as separate input channels.	
	\item \textbf{NSFNO:} In the implicit approach described in Section \ref{Implicit approach}, we substitute FNO structure for each subnetwork. Neumann series are integrated into the entire network.
	\item \textbf{FNONS:} In the explicit approach described in Section \ref{Explicit approach}, we substitute FNO structure for the small network. Neumann series are executed after the subnetwork is trained.
\end{itemize}

All proposed networks is implemented in PyTorch and the training is conducted on a NVIDIA GeForce RTX 3090 GPU card. In all subsequent experiments, unless explicitly specified, we will discretize the computational domain using a uniform grid of 129 $\times$ 129 points.  
 
\subsection{Training configurations}

\hspace{2em}The scatterer $q$ is sampled in the following smooth Gaussian-like distribution:
 
 \begin{equation}
 \begin{aligned}
 	 	q(x,y) &= \sum_{i=1}^\eta \lambda_i \exp( -a_i(x-b_i)^2 - c_i(y-d_i)^2), \qquad (x,y) \in \Omega_0 \\
 	 	\eta &\sim U(1,6) \quad \text{is a random integer}, \\
 	 	a_i,c_i &\stackrel{\text { i.i.d. }}{\sim} \text{Uniform}(R/2,R) 
 	 	\quad \text{for} \quad i=1,2,\dots,\eta, \\
 	 	b_i,d_i &\stackrel{\text { i.i.d. }}{\sim} \text{Uniform}(0.2,0.8)
 	 	\quad \text{for} \quad i=1,2,\dots,\eta,\\
 	 	\lambda_i &\stackrel{\text { i.i.d. }}{\sim} \text{Uniform}(-1,1)
 	 	\quad \text{for} \quad i=1,2,\dots,\eta,
 \end{aligned}	
 \label{q_gen}
 \end{equation}
 We set $R = 200$ to make sure the scatterer can be compactly supported. We then uniformly normalize the $L^{\infty}(q)$ to 0.1 to enhance training efficiency. As for the source, we only adopt 4 different plane waves, of which the incident angles are $0,{\pi}/{2},\pi,3\pi/2$. The ground truth of the corresponding scattered field are computed with the PML approach described in \ref{PML approach}. Specifically, The forward problem is converted into Helmholtz equation with complex coefficients and homogeneous boundary condition in a bounded domain,  then transformed into a linear system by the Finite Difference Method (FDM). The numerical simulation is first performed on a fine mesh with $513 \times 513$ grid points, then down sampled to a coarser mesh with $129\times 129$ grid points.

Except for the supervised data loss $\mathcal{L}_{data}$, we also introduce the physical informed PDE loss $\mathcal{L}_{PDE}$ to enhance the physical consistency and generalization capability. It is known that PINNs-like methods often fails to solve PDE with high frequency oscillation \cite{li2023physicsinformed}, so we combine two forms of loss with a hyperparameter $\lambda\in(0,1)$. The loss function can be summarized as follows: 

 \begin{equation}
 \label{Loss_eq}
 \begin{aligned}
 	 	\mathcal{L}_{total}(\mathcal{N}_{\theta}) &= \mathcal{L}_{data}(\mathcal{N}_{\theta}) + \lambda \mathcal{L}_{PDE}(N_{\theta}),\\
 	 	\mathcal{L}_{data}(\mathcal{N}_{\theta}) &= \mathbb{E}_{q,u^i} \|\mathcal{N}_{\theta}(q,u^i) - u^s\|_{L^2(\Omega)},\\
 	 	\mathcal{L}_{PDE}(\mathcal{N}_{\theta}) &=\left\{\begin{array}{l}
\frac{1}{k^2}\mathbb{E}_{q,u^i} \| \Delta \mathcal{N}_{\theta}(q,u^i) + k^2(1+q) \mathcal{N}_{\theta}(q,u^i) + k^2qu^i\|_{L^2(\Omega)} \;\text{ for FNO and NSFNO},\\
\frac{1}{k^2}\mathbb{E}_{q,u^i} \| \Delta \mathcal{N}_{\theta}(q,u^i) + k^2 \mathcal{N}_{\theta}(q,u^i) + k^2qu^i\|_{L^2(\Omega)} \;\text{   for FNONS,}
\end{array}\right.\\
\end{aligned}
 \end{equation}
where $\mathcal{N}_{\theta}$ denotes the network with parameters $\theta$. $u^s$ denotes the true scattered field. It should be pointed that the FNONS approach focuses on approximating the first linear PDE in the Neumann series \eqref{NS}, the ground truth differs from the others. 

As for detailed network structure, we make some modifications from the original FNO \cite{li2020fourier}. We replace the pointwise normalization step by a double-convolution layer to reinforce the feature extraction process. The Nerumann series in FNONS and NSFNO are both truncated with $L = 3$.

Our training set consists of 1024 instances of scatterer and the test set consists of 128 instances. The Adam optimizer \cite{kingma2014adam} is used to train the networks. 
Over 500 training epochs, the learning rate is adjusted using the cosine annealing scheduler \cite{loshchilov2016sgdr}, varying from 2e-3 to 1e-5, while maintaining a fixed batch size of 16.

\subsection{In-distribution performance}
\label{In}

\hspace{2em}In our hardware setting, the FNO model with 2.37 million parameters requires 8.85 seconds per training epoch. In contrast, the NSFNO model replicates the FNO's structure three times, tripling both the number of parameters and the training duration relative to FNO. Meanwhile, the FNONS configuration maintains a parameter volume akin to FNO's but incorporates dual computations for every complex value input, effectively doubling its training time in comparison to FNO. Notably, both FNO and NSFNO can directly give the scattered wave, FNONS necessitates an additional Neumann series summation after the training stage. 

\begin{table}[h!]
\begin{center}
\begin{tabular}{c|c|c|c}
      & k = 20 & k = 40 & k = 60 \\ \hline
FNO   &    0.14\%    &    0.40\%    &   1.20\%    \\ \hline
NSFNO &    0.03\%    &    0.20\%    &   0.60\%    \\ \hline
FNONS &    0.08\%    &    0.21\%    &   0.29\%   
\end{tabular}
\end{center}
\caption{Average relative error on the test set.}
\label{IDresults}
\end{table}

The average $L^2$ relative error of the four incident fields is summarized in Table \ref{IDresults}. We can observe that these methods all demonstrate similar performance on the test set, indicating their well generalization ability to  in-distribution scatterers. All over evaluated scenarios with wavenumber of 60, the FNO prediction exhibits a misfit of 1.20\%, which indicates the baseline accuracy among the models. The NSFNO model reduces the misfit to 0.60\%, demonstrating an enhanced fidelity. The FNONS further minimizes the misfit to 0.29\%, and reflects its superior approximation capability. It is evident that the incorporation of Neumann series (NS) significantly improves the accuracy when compared to the baseline FNO method. Specifically, FNONS method performs the best among the three methods.

We further select the incident wave directed from 0 degree with wavenumber $k=60$ to verify the discernible performance variation among different neural network methods. Figure \ref{example} illustrates this comparison: the first row displays the real part of the scatterer $q(x)$, the incident wave $u^i(x)$ and the precise scattered wave $u^s(x)$. The second row shows the real parts of the scattered waves as predicted by three distinct networks: FNO, NSFNO, and FNONS, respectively. The third row quantifies the discrepancy between these predictions and the exact solution. We can observe that that networks augmented with NS not only yield  more accurate, but also give smoother misfit, which is indicative of their improved stability and robustness in modeling complex wave phenomena.
  
\begin{figure}[h!]
	\centering
	\includegraphics[width=0.8\textwidth]{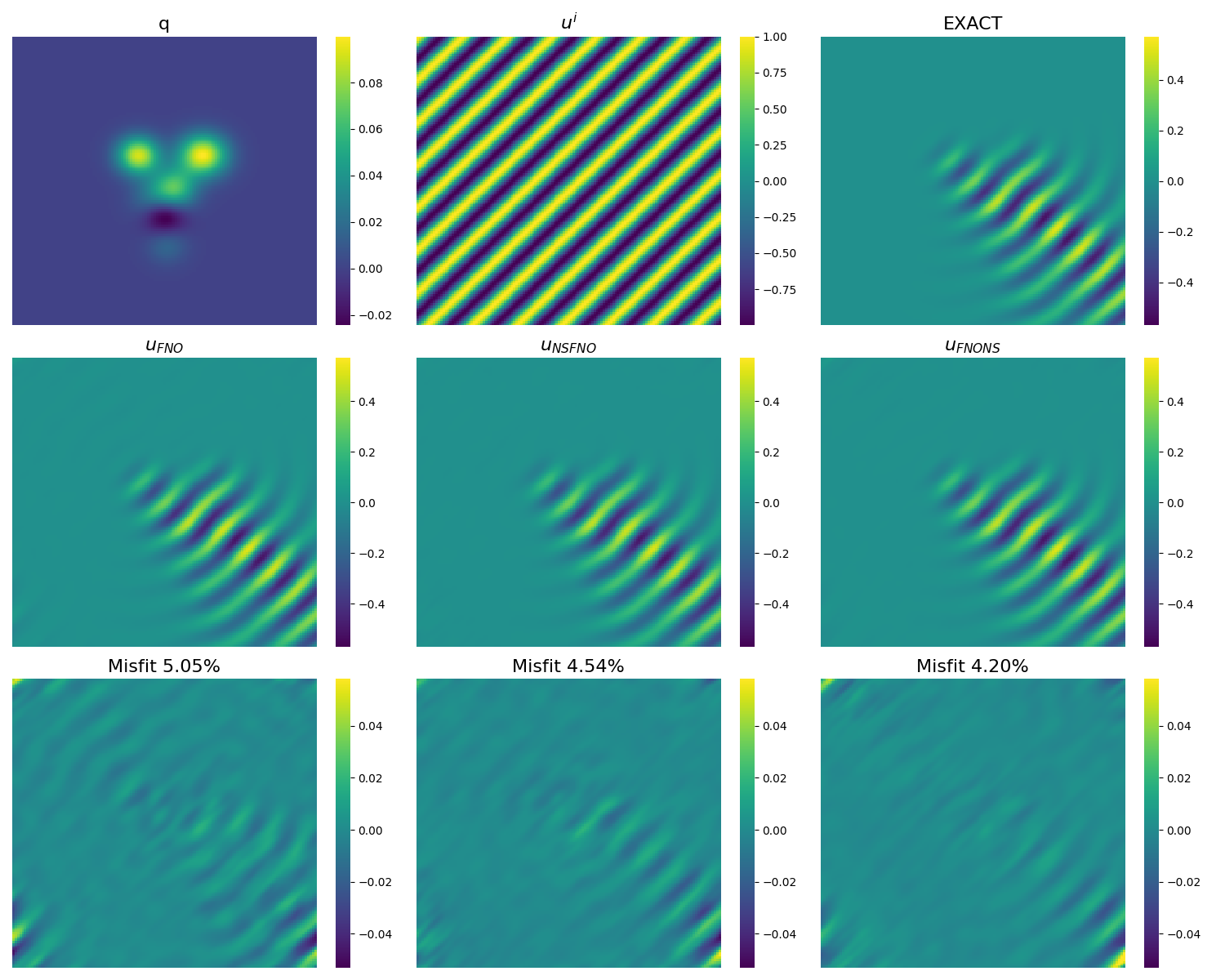}
	\caption{Comparison of predicted scattered fields at wavenumber k=60 for FNO, NSFNO, and FNONS.}
	\label{example}
\end{figure}

\begin{figure}[h!]
\centering
    \begin{subfigure}[h!]{0.3\linewidth}
        \centering
        \includegraphics[width=1\textwidth]{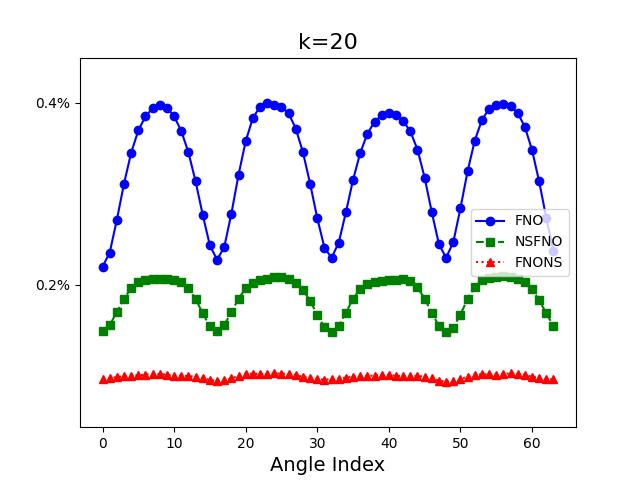}
    \end{subfigure}%
    \begin{subfigure}[h!]{0.3\linewidth}
        \centering
        \includegraphics[width=1\textwidth]{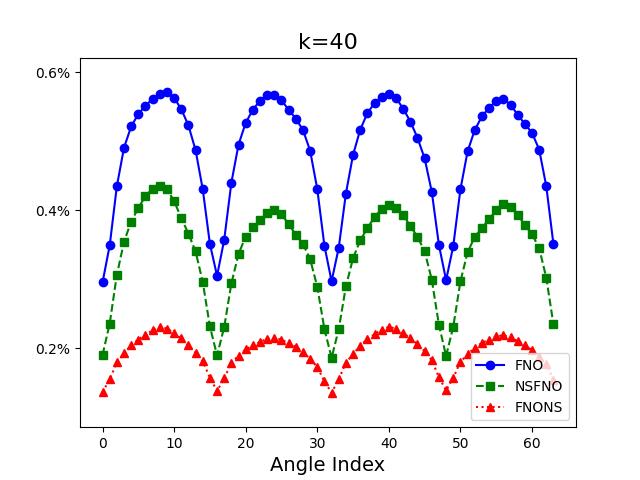}
    \end{subfigure}
        \begin{subfigure}[h!]{0.3\linewidth}
        \centering
        \includegraphics[width=1\textwidth]{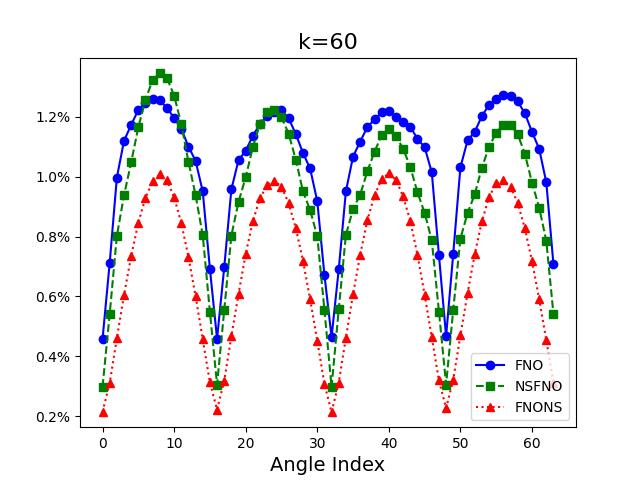}
    \end{subfigure}
    \caption{Network performance on out-of-distribution incident angles at various wavenumbers.}
    \label{generalization}
\end{figure}

\subsection{Out-of-distribution performance}

\hspace{2em}All three networks have shown good accuracy with in-distribution parameters. Specifically, the training is restricted to four incident angles to strengthen model robustness. Nevertheless, in the real-world inverse problem setups, incident waves originate from any direction, and it is crucial that the network reliably predicts the scattered wave for directions not included in the training set. Regarding scatterers, without prior information on their characteristics, we simply train the networks using smooth scatterers of uniform maximum magnitude of 0.1. However, real-world scatterers may greatly differ in magnitude, shape, and position from those in the training set. Furthermore, during inversion procedures, a series of scatterers are calculated iteratively and viewed as network inputs. Consequently, it is imperative to verify the network's effectiveness with varied scatterer inputs, confirming its utility as a reliable forward solver.

We first test generalization capability with respect to incident direction. For this purpose, we generate a set of 128 in-distribution scatterers and synthesized 64 plane waves coming from the directions $2\pi j/64,j = 0,1,\dots,63,$. Figure \ref{generalization} illustrates the average $L^2$ relative error metric to assess the difference between the predicted and actual scattered fields. The results indicate that all three networks maintain low error margins at angles included in the training set ($0,\pi/2,\pi,3\pi/2$), while also maintain controlled error at angles beyond the training distribution. Such results validate the effectiveness of our selective angle training approach. In the comparison of network performances, FNONS  consistently deliveres superior results over all considered wavenumbers and showcases remarkable robustness to variations in incident direction.

We further investigate networks' generalization capabilities by analyzing their performance on scatterers with various maximum magnitudes. In Figure \ref{magnitude_forward}, the relative errors with respect to exact scattered field are plotted, where the solid lines represent the case with incident angle 0 and the dashed lines represent the case with incident angle $\pi/4$. These angles correspond to the previously established lower and upper bounds of error for the networks at different incident angles. The FNO network shows a pronounced increase in error when the scatterer magnitude differs from that seen in the training dataset. However, integrating Neumann series mitigates this issue, maintains nearly unchanged relative error even for untrained magnitudes. This marked disparity in accuracy stems from the design of the network structure. The FNO model uses the scatterer function $q$ as an input channel. This could not address adequately on the solution's nonlinearity with respect to the scatterer properties. In contrast, the Neumann series transforms the nonlinearity into an approximate linearity within each sub-network, which substantially enhances the model's generalization capacity regarding scatterers $q(x)$.

Consequently, our experiments show that all networks reliably process in-distribution parameters, while incorporating Neumann series considerably enhances their capacity on generalization to out-of-distribution inputs. Detailed examination of networks dealing with complex scatterer configurations, including diverse shapes and positions, will be discussed in Section \ref{Inverse Simulation}.

\begin{figure}[h!]
\centering
    \begin{subfigure}[h!]{\linewidth}
        \centering
        \includegraphics[width=0.6\textwidth]{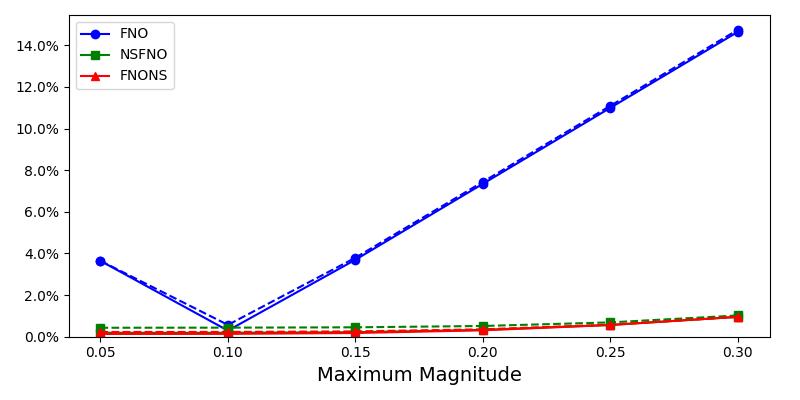}
    \end{subfigure}
    \caption{Relative error analysis of neural networks with different scatterer magnitudes at $k=40$.}
    \label{magnitude_forward}
\end{figure}

\section{Simulation results for inverse problem}
\label{Inverse Simulation}

\subsection{Inverse problem configuration}

\hspace{2em}We continue the classification method outlined in Section \ref{forward results} to compare performance of different forward solver. In addition to three network-based methodologies—FNO, NSFNO, and FNONS—we use the conventional finite difference method with MUMPS solver \cite{amestoy2000mumps} as the benchmark. In the following sections, we refer to it as MUMPS method. The finite difference method leads to a linear algebraic equation with a sparse matrix, which is more efficiently solved on the CPU. Forcing these methods onto the GPU requires converting the sparse matrix into a dense format, increasing memory usage and reducing computational efficiency. Consequently, the MUMPS method is executed on the CPU (Intel(R) Xeon(R) Gold 6226R CPU @ 2.90GHz) to leverage its efficiency with sparse matrices, while network-based methods are deployed on the GPU (RTX 3090) to utilize their computational power.

Regarding the simulation settings, unless specified otherwise, the experiments were conducted at the wavenumber $k=40$. We set $M=N=64 $ and discretize the computational domain with a 129 $\times$ 129 grid. To prevent the issue of the inverse crime,
measurements in our simulations are intentionally carried out on a finer mesh with 1025$\times$1025 grid points. During the optimization, we employ the L-BFGS algorithm \cite{liu1989limited} as the optimizer. The algorithm begins with a zero initial value, without requiring prior assumptions about the scatterer. The iterative process concludes when either it reaches a maximum within 15 iterations or any iteration exceeds 20 line search steps.
 
\subsection{Simple example}
\label{Simplest}

\hspace{2em}Following a similar manner to the forward problem, we initially select in-distribution scatterer to assess the feasibility of the inversion process. This specific target is characterized by

\begin{equation}
\label{test}
	q(x,y) = \exp(-150(x-0.3)^2-70(y-0.6)^2) - 0.7\exp(-40(x-0.7)^2 - 90(y-0.4)^2),
\end{equation}
We subsequently set the maximum value of $q$ at 0.1 and align it with the training stage. For simplicity, we initially operate within the framework of the adjoint state method (see Algorithm \ref{gradient}) and utilize noise-free measurements.

\remark{ Directly using the adjoint state method may introduce specific issues when involving network methods. Recall that the network serves as a surrogate solver for the forward problem, receiving inputs of scatterers and incident waves, aligning with the computation of $u_1$ in Algorithm \ref{algorithm}. However, when using the trained network to compute the adjoint equation $u_2$, the input distribution differs from that of the forward problem. This discrepancy may lead to inaccuracies in gradient calculations. We will explore the impact of this issue in Section \ref{ability}.}
\label{remarkk}

\begin{figure}[h!]
\centering
    \begin{subfigure}[h!]{0.35\linewidth}
        \centering
        \includegraphics[width=\textwidth]{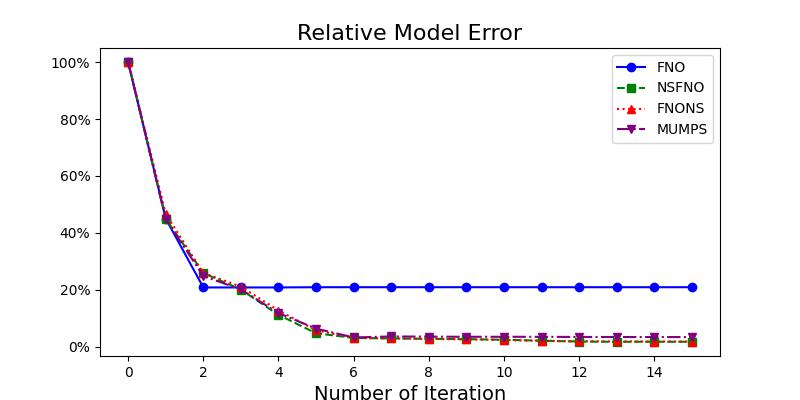}
        \caption{}
    \end{subfigure}%
    \begin{subfigure}[h!]{0.35\linewidth}
        \centering
        \includegraphics[width=\textwidth]{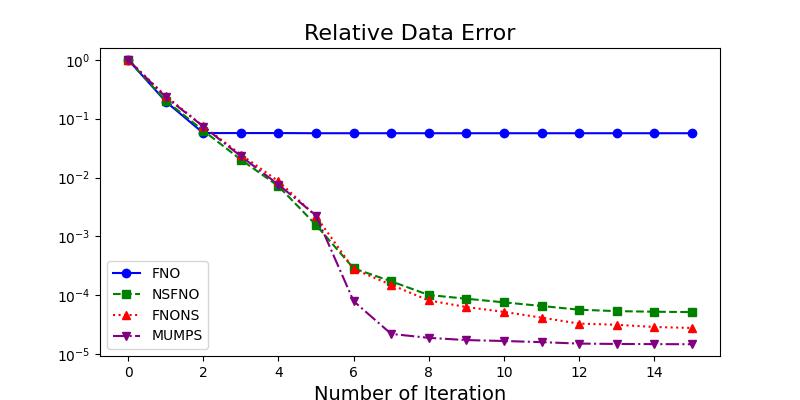}
        \caption{}
    \end{subfigure}
    \begin{subfigure}[h!]{\linewidth}
        \centering
        \includegraphics[width=0.7\textwidth]{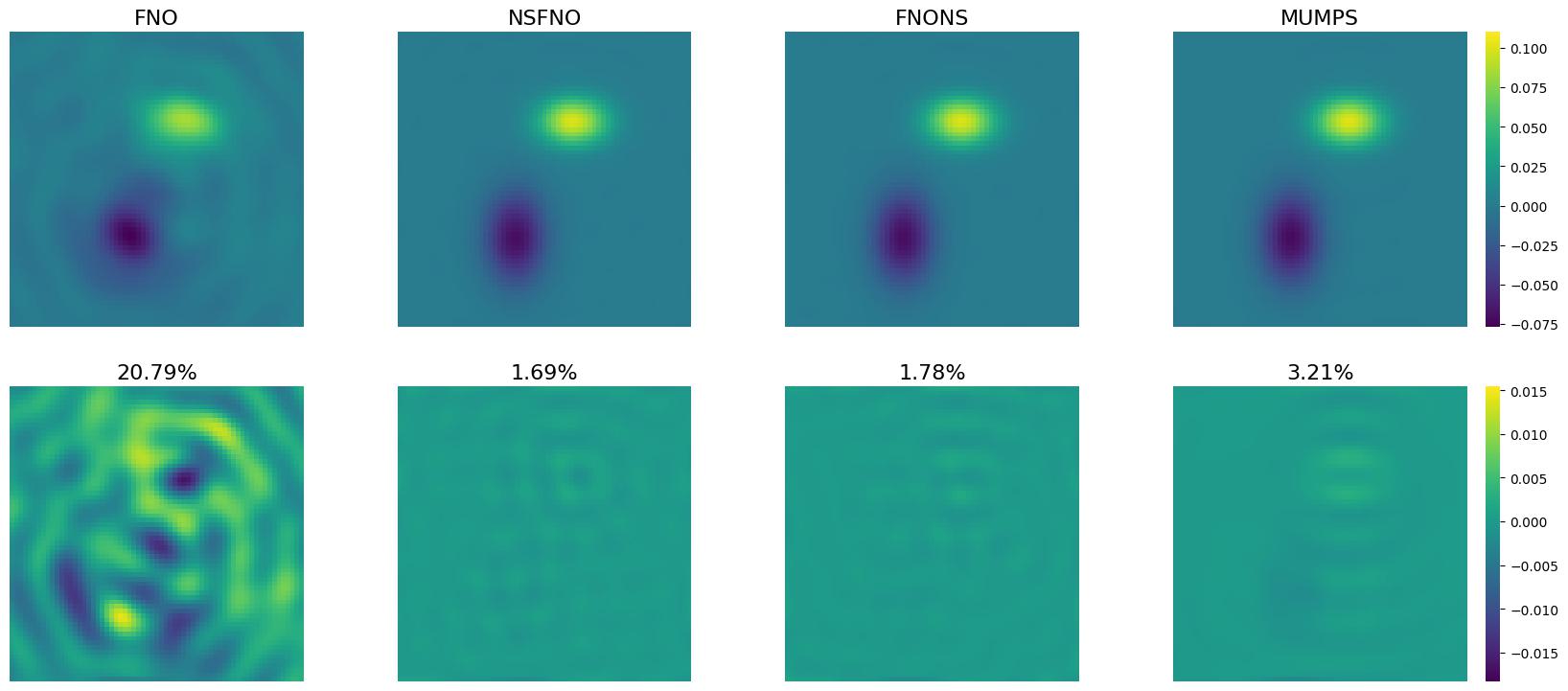}
        \caption{}
    \end{subfigure}
    \caption{Performance comparison of different inversion approaches at $k=40$. (a) shows evolution of the relative $L^2$ error between the reconstructed scatterer and the exact scatterer, (b)  tracks evolution of the relative objective function value, (c) presents best reconstruction results and the corresponding relative error percentage.}
    \label{k40}
\end{figure}

Figure \ref{k40} compares the performance of four approaches at $k=40$. The FNO approach captures the basic profile of the scatterer but concludes iterations prematurely due to lower solution precision. In comparison, NSFNO and FNONS perform comparably and match the MUMPS approach in terms of model and data errors. This indicates that integrating the Neumann series with FNO significantly enhances the accuracy of the inversion solution by improving its performance on the forward problem. Further details on computational efficiency will be presented in the following section.

\subsection{Performance across different magnitudes}
\label{Magnitude}

\hspace{2em}Let us revisit the initial motivation for using network-based approaches. We aim to train a high-precision surrogate network model for the forward problem. The trained neural network allows for multiple calls during the inversion process and maintains accuracy while significantly speeds up the inversion. In this framework, the training process can be conducted offline before being called upon, operating independently from the online iterative optimization process. We anticipate that the surrogate network model will accurately handle inputs from various scatterers without prior knowledge of their characteristics, including their profiles and magnitudes. Therefore, in addition to the in-distribution tests discussed in the previous section, we must also evaluate the model’s ability to successfully reconstruct out-of-distribution scatterers.

In this section, we examine the influence of scatterer's magnitude. We first retain the scatterer from the previous section but adjust its magnitude to 0.4. Figure \ref{0.4-ASM} reveals that FNO reaches its expressive limit early in the iterations and results in an imprecise outcome. In contrast, FNONS consistently reduces the objective function value, significantly outperforms NSFNO and produces results comparable to MUMPS. These results align closely with those from the forward problem tests in Figure \ref{Inverse Simulation}, where the FNO method performs poorly when tasked with recovering scatterers of larger magnitudes.

\begin{figure}[h!]
\centering
    \begin{subfigure}[h!]{0.35\linewidth}
        \centering
        \includegraphics[width=\textwidth]{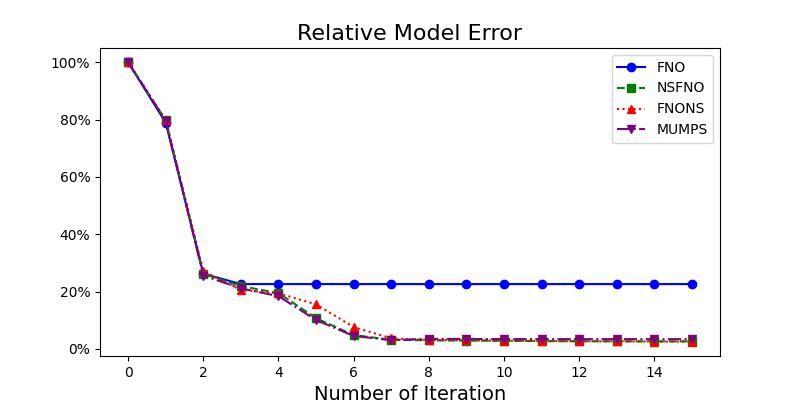}
    \end{subfigure}%
    \begin{subfigure}[h!]{0.35\linewidth}
        \centering
        \includegraphics[width=\textwidth]{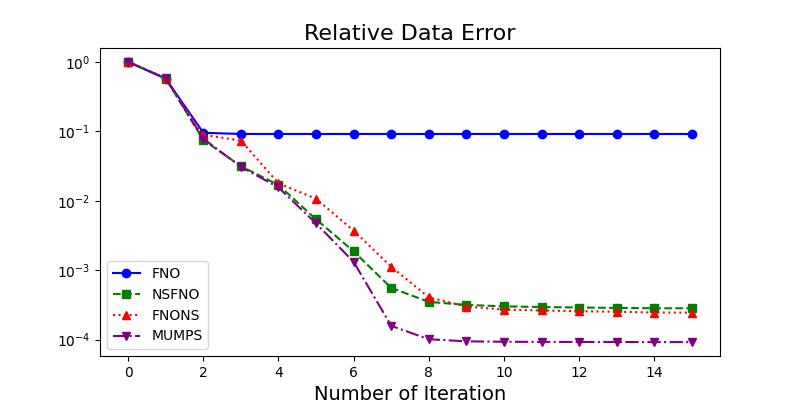}
    \end{subfigure}
        \begin{subfigure}[h!]{0.7\linewidth}
        \centering
        \includegraphics[width=\textwidth]{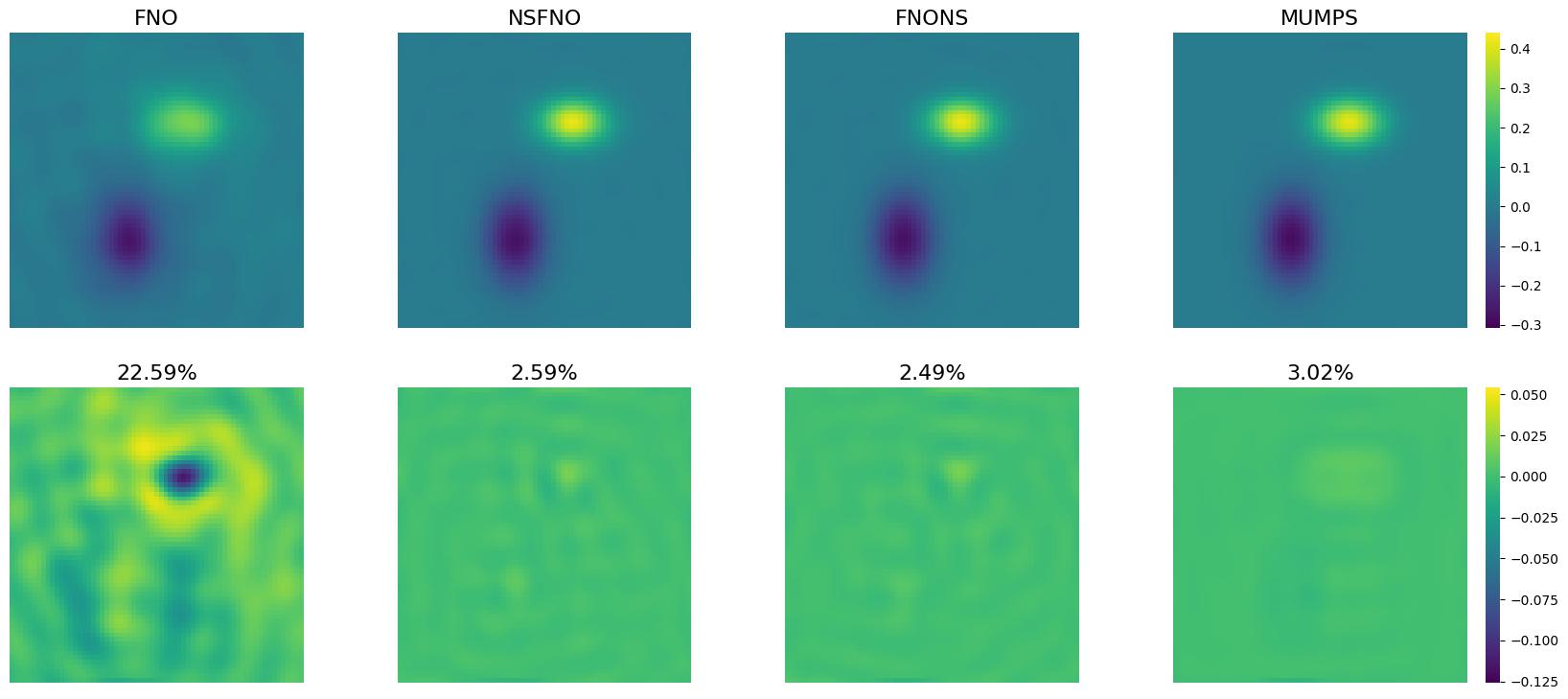}
    \end{subfigure}
    \caption{Performance comparison of different inversion approaches using the adjoint state method at $k=40$ and $\|q\|_{\infty} = 0.4$.}
    \label{0.4-ASM}
\end{figure}

Besides using the Adjoint State Method (ASM) for gradient computation, automatic differentiation (AD) offers an alternative by systematically breaking down mathematical expressions into elementary operations and applying the chain rule \cite{paszke2017automatic}. This method avoids the need for deriving adjoint equations when neural network serves as solver for the forward problem and is more flexible. We exhibit the inversion performance by using AD to compute the gradients in Figure \ref{0.4-AD}, which demonstrates that AD enhances the efficiency and accuracy of inversion processes for NSFNO and FNONS under the same settings as ASM. This improvement suggests that AD's potential to improve inversion accuracy, especially with complex network-based models, is significant. ASM may suffer from inaccurate gradient calculations as noted in Remark \ref{remarkk}, while gradient computation by AD is intrinsically linked to the network itself, which makes the accuracy of the inverse problem heavily dependent on the precision of the forward problem. This distinction underlines AD's advantage in achieving more reliable inversion results. Figure \ref{two_compare} further validates this perspective. The relative error of the FNO remains consistent across various magnitudes under ASM, as shown in (a). In comparison, as depicted in (b), automatic differentiation allows the relative error of inversion to exhibit a trend similar to that observed in the forward problem in Figure \ref{Inverse Simulation}.

\begin{figure}[h!]
\centering
    \begin{subfigure}[h!]{0.35\linewidth}
        \centering
        \includegraphics[width=\textwidth]{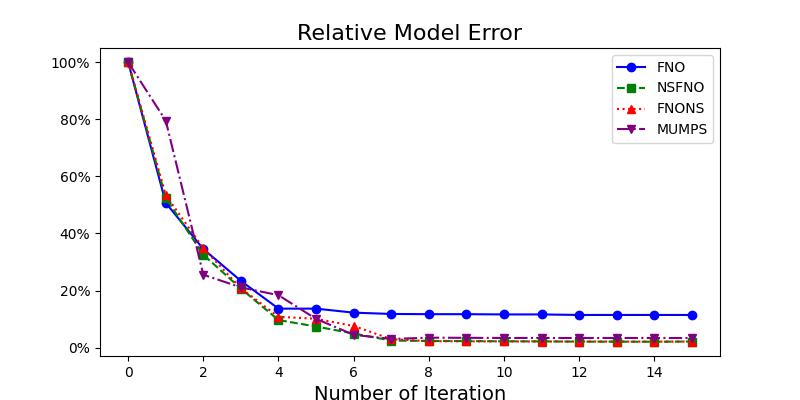}
    \end{subfigure}%
    \begin{subfigure}[h!]{0.35\linewidth}
        \centering
        \includegraphics[width=\textwidth]{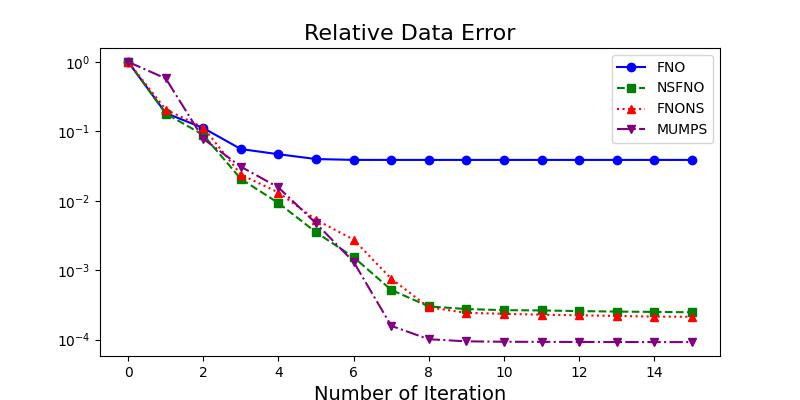}
    \end{subfigure}
        \begin{subfigure}[h!]{0.7\linewidth}
        \centering
        \includegraphics[width=\textwidth]{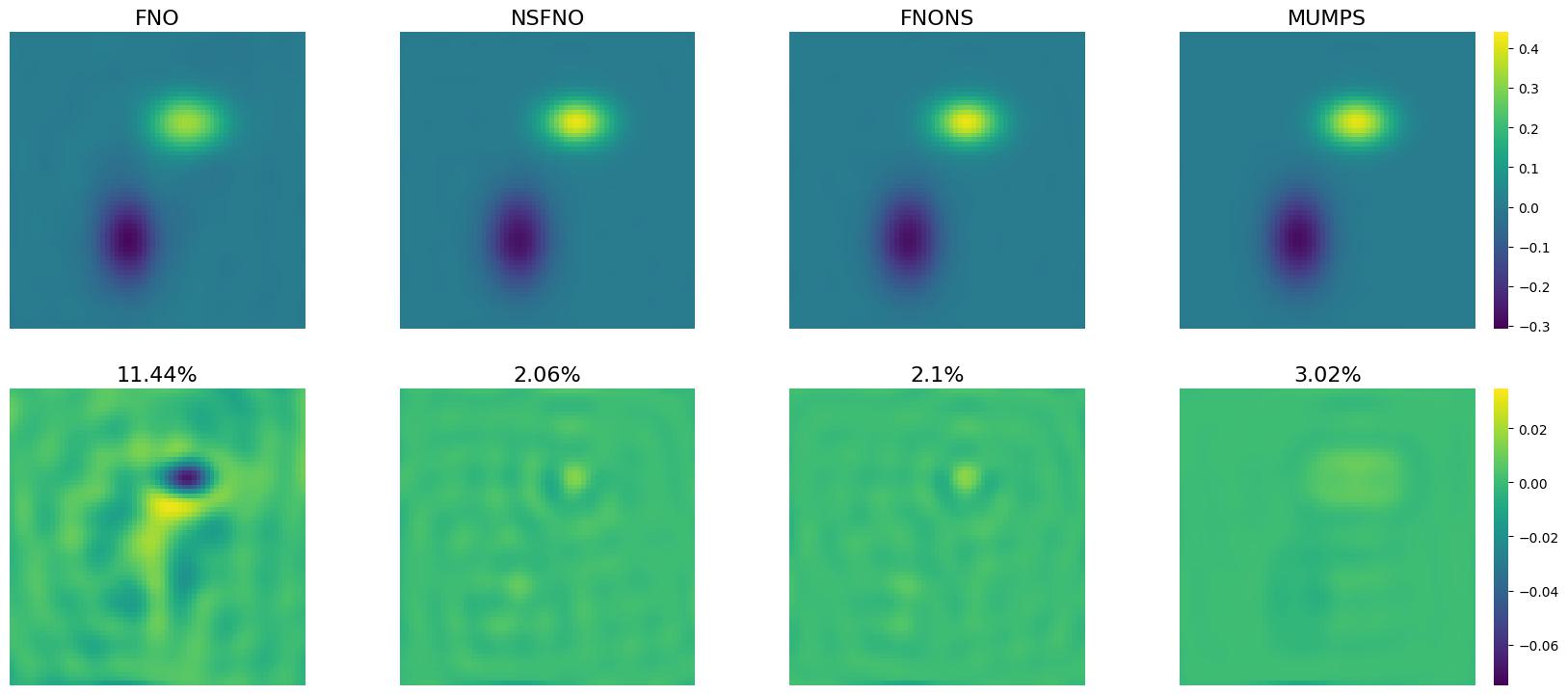}
    \end{subfigure}
    \caption{Performance comparison of different inversion approaches using automatic differentiation at $k=40$ and $\; \|q\|_{\infty} = 0.4$.}
    \label{0.4-AD}
\end{figure}

\begin{figure}[h!]
	\centering
	\begin{subfigure}[h!]{0.4\linewidth}
	\includegraphics[width = \textwidth]{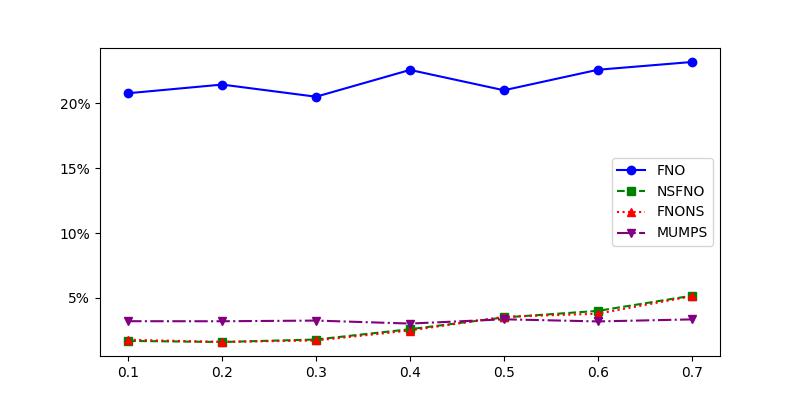}
	\caption{Adjoint state method}
	\end{subfigure}
		\centering
	\begin{subfigure}[h!]{0.4\linewidth}
	\includegraphics[width = \textwidth]{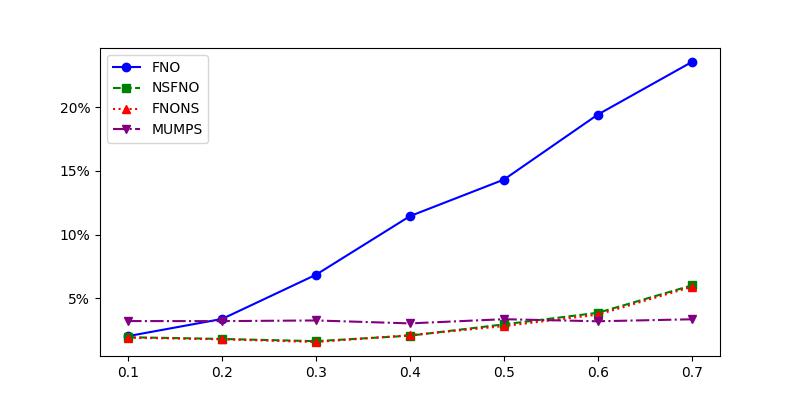}
	\caption{Auto differentiation}
	\end{subfigure}
	\caption{The evolution of the relative reconstruction error in relation to the scatterer's scale for automatic differentiation. The horizontal axis represents the magnitude of the scatterer and the vertical axis represents the relative error of the recovered scatterer.}
\label{two_compare}
\end{figure}

\begin{table}[h!]
\centering
\begin{tabular}{cccccc}
\hline
 Gradient Solver& \multicolumn{1}{c}{Approach} &  N-fev/N-grad & T-avg(s) & T-toal(s) \\ \hline
 \multicolumn{1}{c}{\multirow{4}{*}{Automatic Differentiation}}
 & \textbf{FNO} &  56 & 0.083 & 4.690 \\
 \multicolumn{1}{c}{}
 & \textbf{NSFNO} & 20 & 0.184 & \textbf{3.689} \\
 \multicolumn{1}{c}{}
 & \textbf{FNONS} & 20 & 0.339 & 6.799 \\ 
 \hline
\multicolumn{1}{c}{\multirow{4}{*}{Adjoint State Method}} 
& \textbf{FNO} &  89 & 0.051 & 4.582 \\
\multicolumn{1}{c}{} 
& \textbf{NSFNO}  & 18 & 0.122 & \textbf{2.200} \\
\multicolumn{1}{c}{} 
& \textbf{FNONS}  & 16 & 0.226 & 3.617 \\

  \multicolumn{1}{c}{}
 & \textbf{MUMPS} & 17 & 4.205 & 71.49 \\
 \hline
\end{tabular}

\caption{Comparison of the computational efficiency of different gradient solvers. The computation of both $J(q)$ and its gradient $\nabla J(q)$ is considered a single computational process. The term 'N-fev/N-grad' refers to the aggregate count of these computational processes performed. T-avg represents the average runtime for function evaluations. T-total is the product of these two, representing the overall running time.} 

\label{effiency}
\end{table}

Table \ref{effiency} demonstrates the computational efficiency of gradient solvers for various methods at the magnitude 0.4. Both ASM and AD, particularly when enhanced with Neumann series (NS), align the number of function evaluations with the standard 15 iterations, benefit from robust forward problem's performance across different magnitudes. Specifically, NSFNO shows superior performance, with total times of 2.200 seconds under ASM and 3.689 seconds under AD, and significantly outperforms the benchmark numerical method, MUMPS, which requires 71.49 seconds. This improvement is substantial and demonstrates the efficiency of NS-enhanced methods. Generally, ASM processes gradients more rapidly than AD, correlating to a single forward network evaluation compared to AD's backpropagation.

\subsection{Performance across different geometries}
\label{ability}

This section focuses on the inversion related to scatterers with increasingly complex out-of-distribution profiles. Four distinct targets were selected for testing: (a) overlapping spheres, (b) a large rectangular robot figure, (c) Austria, and (d) multiple small-scale scatterers. These samples, markedly different from the smooth Gaussian scatterers in the training set, encompass challenges such as large support sets, sharp shapes, multi-scale features, and intricate microstructures. Moreover, to further test the limits of our NS-enhanced networks during the inversion process, the magnitude of the scatterers was increased to 0.6.

\begin{table}[h!]
\centering
\begin{tabular}{ccccccccc}
\hline
 & & \multicolumn{2}{c}{\textbf{FNO}} & \multicolumn{2}{c}{\textbf{NSFNO}} & \multicolumn{2}{c}{\textbf{FNONS}} & \textbf{MUMPS} \\
 \hline
& & ASM & AD & ASM & AD & ASM & AD & ASM \\
 \hline
 \multicolumn{1}{c}{\multirow{4}{*}{\textbf{Rel-err}}} 
 &\textbf{(a)} & 51.11\% & 39.05\% & \textbf{26.60\%} & 27.95\% & 26.94\% & 28.03\% & 24.20\% \\
 &\textbf{(b)} & 57.34\% & 49.37\% & \textbf{34.18\%} & 34.99\% & 35.02\% & 35.08\% & 24.91\% \\
 &\textbf{(c)} & 65.98\% & 56.84\% & 44.47\% & \textbf{38.71\%} & 41.91\% & 38.92\% & 23.72\% \\
&\textbf{(d)} & 56.35\% & 48.70\% & \textbf{39.43\%} & 42.31\% & 39.56\% & 42.57\% & 35.24\% \\
 \hline
  \multicolumn{1}{c}{\multirow{4}{*}{\textbf{SSIM}}}
&\textbf{(a)} & 0.6092 & 0.7381 & 0.7910 & 0.8106 & 0.7934 & \textbf{0.8146} & 0.8272 \\
&\textbf{(b)} & 0.2874 & 0.4915 & 0.5853 & 0.6282 & 0.5897 & \textbf{0.6368} & 0.7377 \\
&\textbf{(c)} & 0.2682 & 0.3778 & \textbf{0.5355} & 0.4633 & 0.4433 & 0.4797 & 0.7423 \\
&\textbf{(d)} & 0.4438 & 0.5841 & 0.5945 & \textbf{0.6539} & 0.6044 & 0.6413 & 0.7034 \\
\hline
\end{tabular}
\caption{Misfits of the reconstructed scatterers with different geometries.}
\label{geo}
\end{table}

\begin{figure}[h!]
\centering
\includegraphics[width=0.8\textwidth]{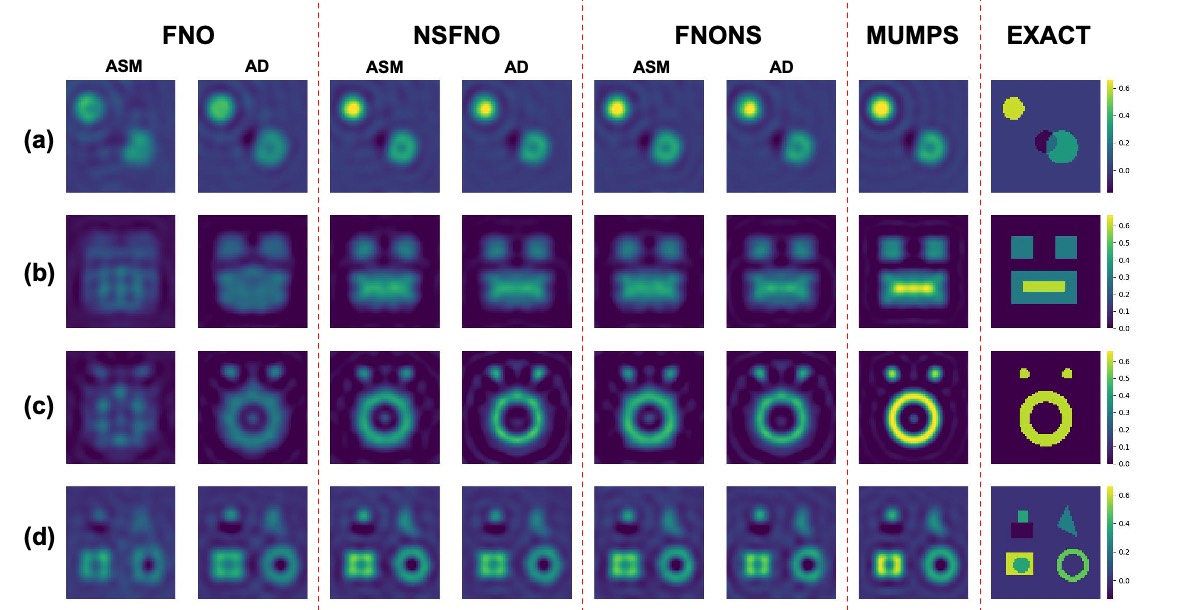}
\caption{Reconstructions of scatterers with different geometries. ASM denotes the adjoint state method to compute the gradient and AD denotes the automatic differentiation method.}
\label{scatterers}
\end{figure}

Figure \ref{scatterers} presents the reconstruction results. 
MUMPS, serving as the benchmark, consistently performs the best. Notably, since MUMPS generates the supervised data for network training, it generally represents the upper limit of accuracy achieved by network methods. The NS-enhanced methods (NSFNO and FNONS) consistently outperform the basic FNO across various scatterer shapes and gradient solvers (ASM and AD). For the lower precision FNO, ASM is a preferable choice; however, for NS-enhanced networks, both gradient solvers yield reconstructions closely approximating those produced by MUMPS. Table \ref{geo} compares the relative errors and Structural Similarity Index Measure (SSIM) performance and highlights the effectiveness of NS-enhanced approaches.

\begin{figure}[h!]
\centering
\includegraphics[width = 0.6\textwidth]{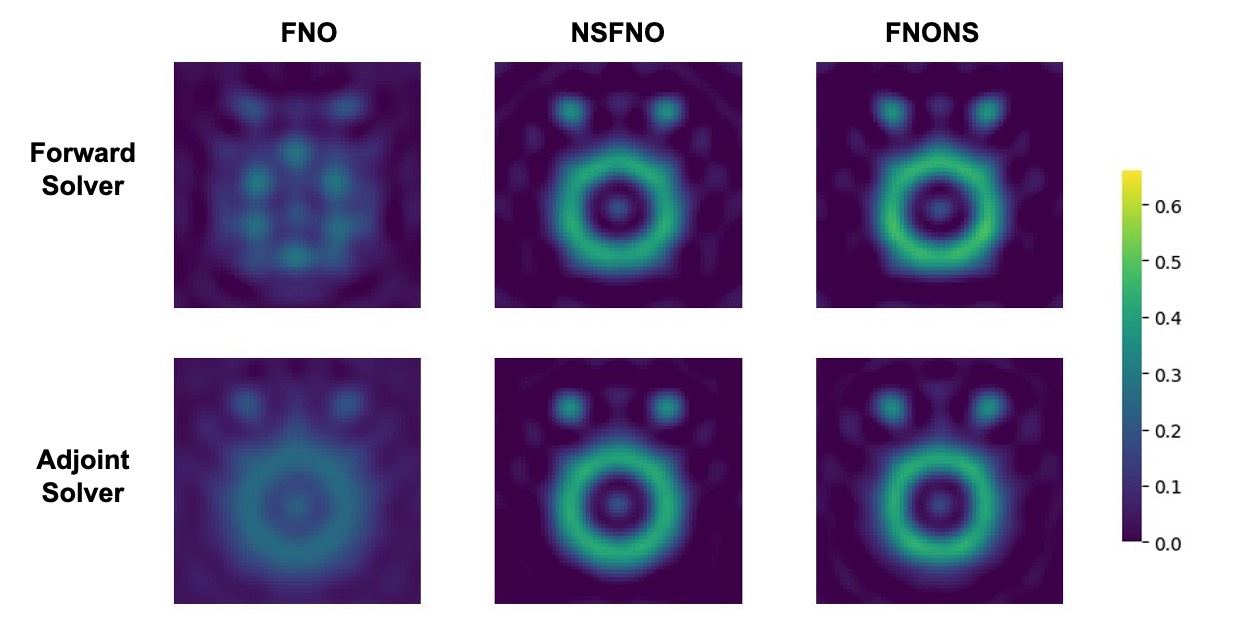}
\caption{Reconstructed results of different gradient solvers with the adjoint state method.}
\label{error-example4}
\end{figure}

Given the verification of the NS-enhanced model's inversion capabilities on such "out-of-distribution" scatterers, we address the concern posed in Remark \ref{remarkk}. This remark questions the impact on reconstruction accuracy when the network processes inputs outside its training distribution, specifically during the computation of the adjoint state equation in ASM. To establish a baseline, we train a network with the same architecture, referred to as the Adjoint solver.
Figure \ref{error-example4} demonstrates that, except for FNO, the reconstructions by the two NS-enhanced methods are remarkably similar. Integrating these results for complex reconstruction scenarios with those in Figure \ref{two_compare}, it is evident that, when using network methods, particularly NS-enhanced, there is no need to train a separate network for the adjoint equation.

\subsection{Robustness of reconstruction}
\label{robustness}

\hspace{2em}Real-world applications often face data with noise due to environmental disturbances or inherent measurement errors. Assessing the robustness of reconstruction algorithms is essential for their practical utility, especially in fields such as medical imaging and geophysical exploration where precise image reconstruction is critical.

\begin{figure}[h!]
\centering
\begin{subfigure}[h!]{0.55\linewidth}
\includegraphics[width = \textwidth]{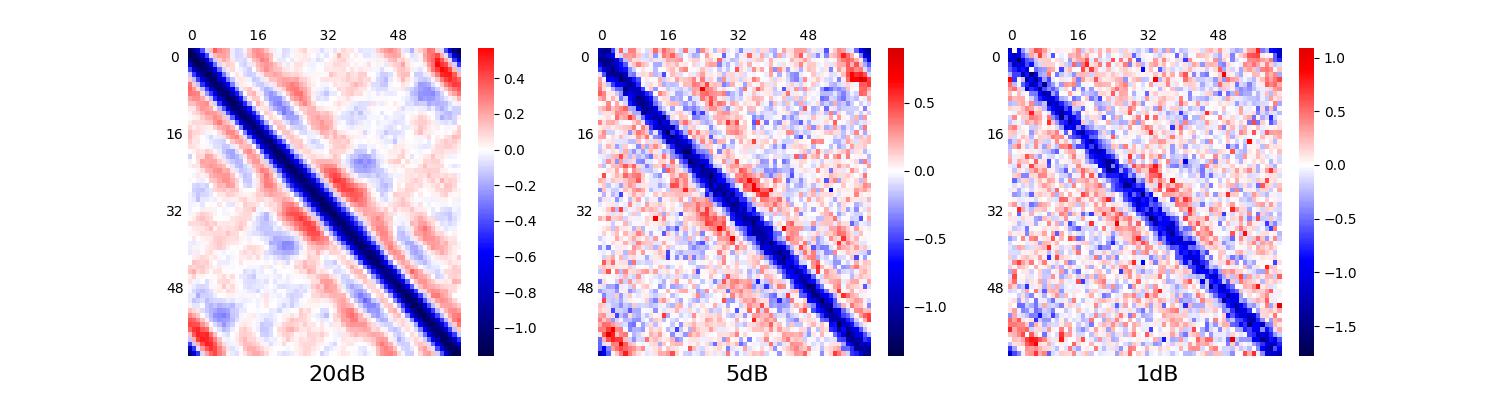}
\caption{}
\end{subfigure}
\begin{subfigure}[h!]{0.4\linewidth}
\includegraphics[width = \textwidth]{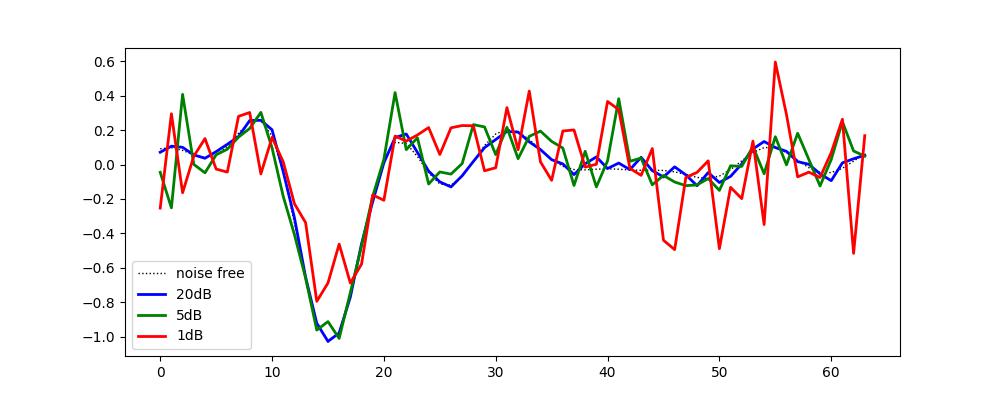}
\caption{}
\end{subfigure}
    \caption{Received scattering data for different Signal-to-Noise Ratios (SNR). (a) shows the real part of the received signals, with the horizontal and vertical axes representing the indices of transmitters and receivers positioned around a circle, respectively. (b) The incident wave approaches from the direction of $\theta = \pi/4$. In the graph, the horizontal axis represents the real part of the received scattered wave, while the vertical axis denotes the locations of the receivers.}
\label{noisy data}
\end{figure}

We select the scatterer profile type (c) from the previous section for testing and set its magnitude to 0.5. Figure \ref{noisy data} illustrates how noise impacts the accuracy of the received signals. Notably, at noise levels of 5dB and 1dB, the data is significantly disrupted, which poses substantial challenges for reconstruction. Figure \ref{error-example1} illustrates the comparative performance of various reconstruction methods at different noise levels. MUMPS consistently delivers the most accurate reconstructions across all noise levels and we establish the reconstructions by MUMPS as the benchmark. As noise level is increased to 5dB and 1dB, the basic FNO exhibits a noticeable decline in performance since it generates increased artifacts and diminished clarity in the images. In comparison, NSFNO and FNONS demonstrate significant robustness and maintain closer fidelity to the original image. They significantly outperform FNO, particularly at higher SNR. 

\begin{figure}[htbp]
\centering
\includegraphics[width=0.7\textwidth]{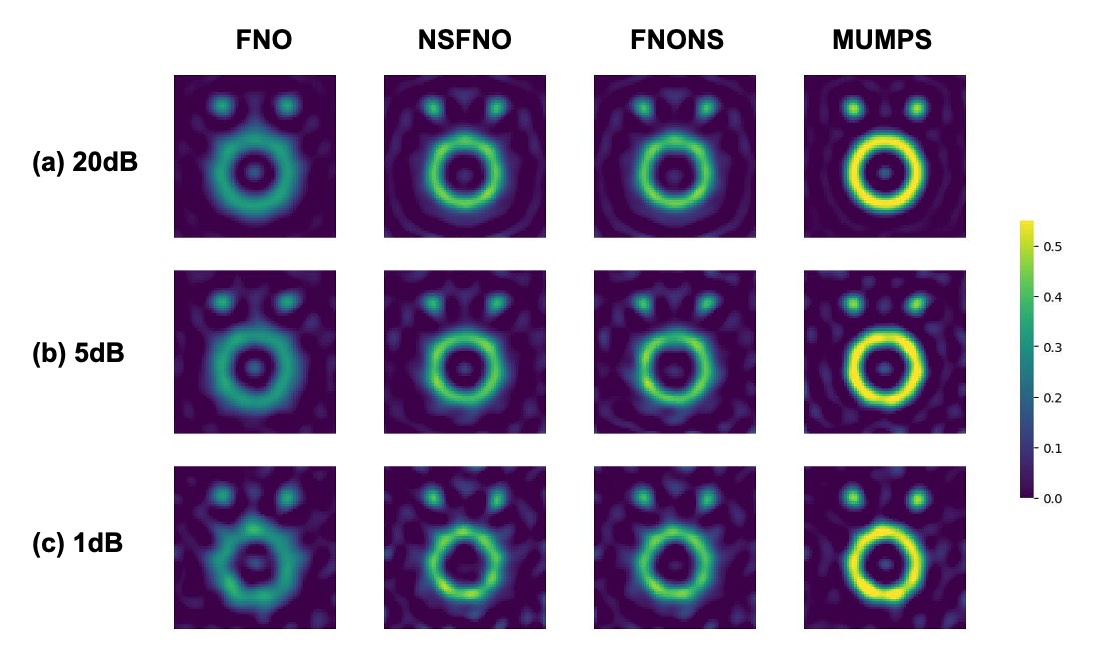}
\caption{Reconstructions of the scatterer for different SNR.}
\label{error-example1}
\end{figure}

The comparison across different SNRs underscores the enhanced robustness of NSFNO and FNONS compared to the FNO method. These enhanced models consistently deliver reconstructions close to the MUMPS benchmark results, 
which establishes them as reliable options for practical applications where noise is prevalent.

\subsection{Performance across various transmitter-receiver layouts}
\label{layout}

\hspace{2em}In practical inverse medium problems, the configuration of transmitters and receivers is crucial. In applications such as seismic inversion \cite{alkhalifah2014scattering} or medical diagnostics \cite{kuprat2009anisotropic}, operational constraints determine the placement of transmitters and receivers, which directly impacts data quality and the effectiveness of inversion algorithms. 
Therefore,
evaluating various layouts aids in assessing the cost-effectiveness of different configurations in industrial control systems.

\begin{figure}[h!]
\centering
\includegraphics[width = 0.6\textwidth]{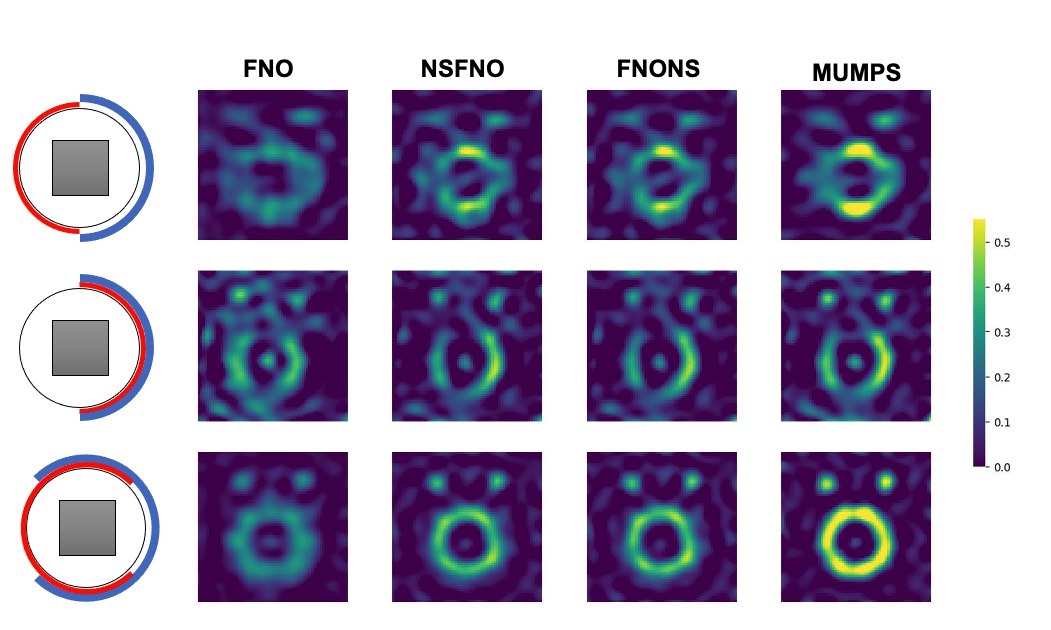}
\caption{
Reconstructions of the scatterer across various transmitter-receiver layouts. Red and blue circles respectively indicate the positions of transmitters and receivers.}
\label{error-example}
\end{figure}

In this section, we focus on a limited-angle setting for the same scatterer as in the previous section and impose a noise with an SNR of 5dB on the data. The reconstruction results are displayed in Figure \ref{error-example}. In the first two configurations, due to the non-optimal placement of transmitters and receivers, even the MUMPS method results in inaccuracies and artifacts. By contrast, the NS-enhanced networks largely match the MUMPS reconstructions, while the basic model appears overly blurred. The third configuration represents the minimal transmitter and receiver settings required for numerical methods to approximately recover the scatterer's outline. In this scenario, both FNONS and NSFNO deliver results significantly surpassing those of FNO.

The assessment of various layouts confirms that NS-enhanced models can match numerical results in practical settings. Their robustness qualifies them as viable and efficient alternatives for designing adaptable and effective systems, where achieving a balance between scientifically robust solutions and economic viability is essential.

\subsection{Performance in high-wavenumber scenarios}
\label{high-wavenumber}

\hspace{2em}In various practical applications, such as medical imaging, geophysical exploration, and industrial non-destructive testing, high reconstruction accuracy is crucial. Common techniques to enhance the precision include regularization strategies and iterative refinement \cite{haber2000optimization}. The process typically initiates with low-frequency inversions to approximate the global minimum, followed by high-frequency inversions that refine the reconstruction and delineate finer details. Therefore, testing the network's capability at higher wavenumbers is also essential to evaluate its suitability as a alternative for forward problem solver.

We select scatterer profile type (d) from the previous section to better examine the inversion capabilities for multiple small-scale scatterers and preserve the 5dB noise level in last section. Figure \ref{bigk} displays visual comparisons of the reconstruction outcomes. The NSFNO and FNONS models demonstrate closer alignment with the MUMPS results, whereas the basic FNO model performs poorly, and produce blurred results, particularly at $k=60$.

\begin{figure}[h!]
\centering
\includegraphics[width=0.7\textwidth]{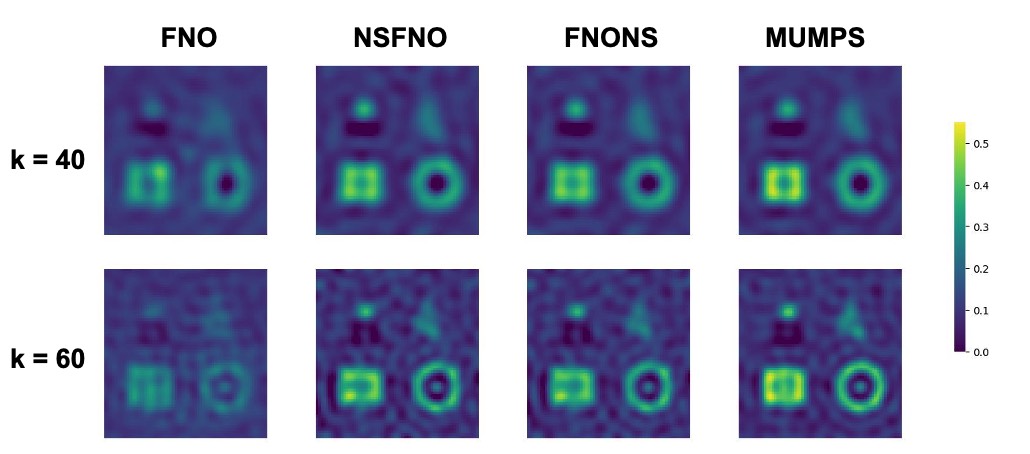}
\caption{Reconstructions of the scatterer at various wavenumbers.}
\label{bigk}
\end{figure}

Table \ref{bigk1} provides a quantitative assessment. At the two wavenumbers $k=40$ and $60$, both the NS-enhanced networks demonstrate lower relative errors and significantly improved SSIM compared to the FNO results. As the wavenumber increases, the recovery quality of the FNO model deteriorates and contradicts the goal of higher precision. Although the NS-enhanced methods achieve higher recovery accuracy, their relative errors exhibit smaller variations compared to the benchmark MUMPS method, which indicates a potential reduction in the network's solving capability.

\begin{table}[h!]
\centering
\begin{tabular}{cccccc}
\hline
& & \textbf{FNO} & \textbf{NSFNO} & \textbf{FNONS} & \textbf{MUMPS} \\
 \hline
\multicolumn{1}{c}{\multirow{2}{*}{\textbf{Rel-err}}}
& k=40 & \textbf{51.36\%} & 40.07\% & 40.05\% & 36.55\%  \\
& k=60 & 59.13\% & \textbf{37.18\%} & \textbf{38.04\%} & \textbf{31.11\%} \\
\hline
\multicolumn{1}{c}{\multirow{2}{*}{\textbf{SSIM}}}  
& k=40 & \textbf{0.5073} & 0.6296 & 0.6305 & 0.6609  \\
& k=60 & 0.4615 & \textbf{0.6336} & \textbf{0.6467} & \textbf{0.6782} \\
\hline
\end{tabular}
\caption{Misfit of the reconstructed results at different wavenumber.}
\label{bigk1}
\end{table}

To the best of the authors' knowledge, solving high-frequency Helmholtz equations with neural networks \cite{song2022versatile} remains a formidable challenge. Higher wavenumbers necessitate computations on finer grids to produce more accurate training data, thus require further refinement of the network architecture to address this intricate learning task. Nevertheless, NS-enhanced networks have explored the feasibility of high-frequency alternative solutions.

\subsection{Adaptability across various network structure}
\label{Adaptability}

\hspace{2em}In the previous subsections, we thoroughly evaluated the enhancements that the NS-enhanced method brings to the FNO model.  Given its specific architectural constraints and varying performance across different scenarios, focusing exclusively on the FNO model might introduce limitations. Therefore, to assess the generalization ability of the NS-enhanced method, it is essential to explore its effectiveness across a variety of neural network architectures. 

To assess the generalization ability of the NS-enhanced method, in addition to the FNO model, we have incorporated UNO (A fourier-enhanced UNet \cite{ronneberger2015u}, detailed in \cite{chen2024nsno}) and CNO \cite{raonic2023convolutional}, which have recently shown promising results in the field of operator learning, as base models. For the scatterers, we created a MNIST-like dataset consisting of ten samples for each digit from 0 to 9, with magnitudes set in the range of 0.2 to 0.7. Inversions are conducted at a wavenumber of k=40, with the data subjected to 5dB of noise.

Figure \ref{mnist} displays one of the reconstruction results from three base models: FNO, UNO, and CNO, where "NS-" and "-NS" represent implicit and explicit embedding methods, respectively. Across all models, embedding Neumann series consistently demonstrates closer approximations to the exact and MUMPS results, particularly in terms of clarity and detail preservation. Notably, the NS-enhanced versions of each model show improved definition and sharper features in the digit outlines compared to their original versions. 

\begin{figure}[h!]
\centering
\includegraphics[width=0.7\textwidth]{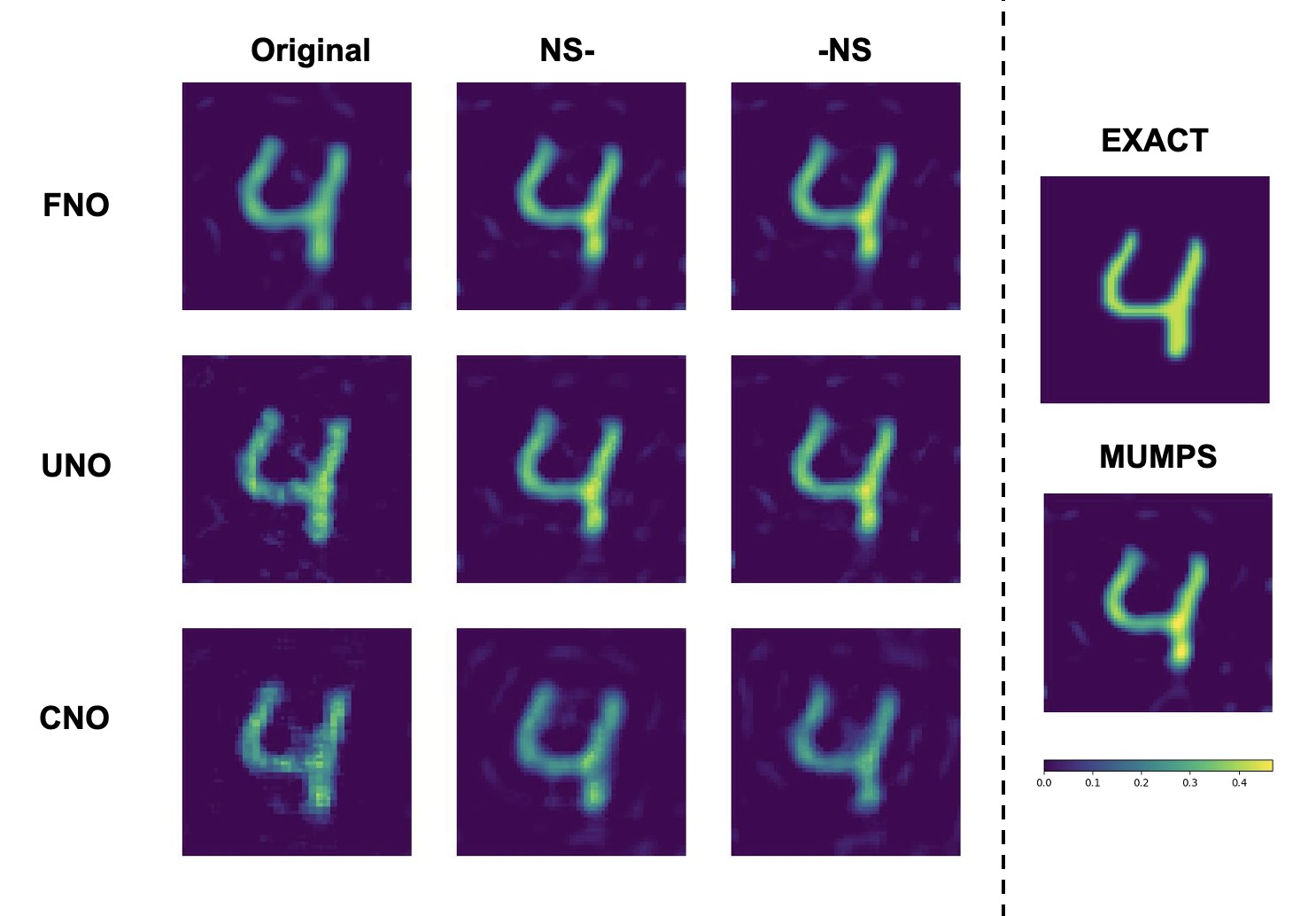}
\caption{Sample of reconstructions on MNIST.}
\label{mnist}
\end{figure}

\begin{table}[h!]
\centering
\begin{tabular}{cccccc}
\hline
Approach & \textbf{Rel-err} & \textbf{SSIM} & \textbf{N-fev} & \textbf{T-avg(s)} & \textbf{T-total(s)} \\
\hline
\textbf{FNO} & 40.80\% & 0.8138 & 74.77 & 0.133 & 9.97 \\
\textbf{NSFNO} & \textbf{26.76\%} & \textbf{0.8827} & 16.74 & 0.308 & \textbf{5.15} \\
\textbf{FNONS} & 26.80\% & 0.8826 & 17.62 & 0.652 & 11.50 \\
\hline
\textbf{UNO} & 45.20\% & 0.8079 & 64.90 & 0.130 & 8.45 \\
\textbf{NSUNO} & 28.48\% & 0.8642 & 18.35 & 0.312 & \textbf{5.73} \\
\textbf{UNONS} & \textbf{27.36\%} & \textbf{0.8798} & 17.44 & 0.671 & 11.71 \\
\hline
\textbf{CNO} & 48.67\% & 0.7988 & 29.27 & 0.132 & \textbf{3.87} \\
\textbf{NSCNO} & \textbf{35.94\%} & \textbf{0.8503} & 16.51 & 0.302 & 5.00 \\
\textbf{CNONS} & 44.71\% & 0.8132 & 17.35 & 0.889 & 15.44 \\
\hline
\textbf{MUMPS} & 14.95\% & 0.9548 & 16.11 & 4.212 & 67.87 \\
\hline
\end{tabular}
\caption{Accuarcy and efficiency comparison on MNIST.}
\label{mnist1}
\end{table}

Table \ref{mnist1} presents a detailed comparison across various neural network models. The NS-enhanced models significantly outperform their original counterparts in terms of relative error and SSIM, and demonstrate superior image reconstruction quality. While the NS-enhanced method results in a slower average solving time (T-avg) due to the stacking of the original networks, it reduces the number of function evaluations (N-fev), thereby achieving comparable total times (T-total). In comparison with MUMPS, the NS-enhanced models exhibit a slight decrease in accuracy but achieve a 5 to 15-fold reduction in total computation time, significantly enhancing computational efficiency.

As a result, the NS-embedding approach has been proven effective for training multiple-input forward networks and addressing their corresponding inverse problems across various neural network architectures. This plug-and-play feature facilitates straightforward integration into existing state-of-the-art computational frameworks, significantly expanding its application scope.

\section{Conclusion}

\hspace{2em}This study presents a novel network framework designed to address the inverse medium problem and capable of processing diverse input types. By incorporating the Neumann series into both the network and post-calculation phases, we develop two variants: the implicit and explicit networks. These proposed methods significantly improve computational efficiency without compromising accuracy, surpassing the original network in both speed and reliability. Specifically, our networks demonstrate robust generalization capabilities, adeptly handling variations in the shapes and magnitudes of scatterers. Additionally, the framework shows strong resilience against noise and limited data, providing rapid and dependable initial solutions for practical applications. Future research will focus more on the performance of models in high-wavenumber regimes and in the presence of large scatterers. Moreover, the universal design of this approach facilitates its adaptation to a broad spectrum of inverse medium challenges, including those related to obstacles, sound-hard surfaces, and transmission media.

\appendix
\section{Adjoint state method}
\hspace{2em}A widely-adopted approach to derive the gradient expression is based on the following adjoint state method \cite{plessix2006review}. With this method, to derive the gradient, we write the Lagrange functional as:

\begin{equation*}
\begin{aligned}
\mathcal{L} (u,\lambda,q) &= \operatorname{Re}(\tilde{J}(u) - \langle \lambda,F(u,q)),\\
\text{where} \quad   \tilde{J}(u) &= \frac{1}{2} \| T u-\bm{d_j}\|_2^2,\\
F(u,q) &= u - \mathcal{S}(q)(qu_j^i),\quad F(u^{\text{opt}},q)=0,\\
\langle f,g\rangle &= \int_{\Omega} f\bar{g} \,dx,
\end{aligned}
\end{equation*}

Let the derivatives of $\mathcal{L}$ with respect to $u$ equals 0, we get the following adjoint-state equation
\begin{equation*}
\begin{aligned}
    \left(\frac{\partial F(u^{\operatorname{opt}},q)}{\partial u}\right)^* \lambda^{\operatorname{opt}} = \frac{\partial \tilde{J}(u^{\operatorname{opt}})}{\partial u},\\
\Rightarrow \lambda^{\operatorname{opt}} =T^{*}(Tu^{\operatorname{opt}}-\bm{d_j}),
\end{aligned}
\end{equation*}
where $(u^{\operatorname{opt}},\lambda^{\operatorname{opt}})$ denotes the optimal solution to $\mathcal{L}(u,\lambda,q)$. The gradient of $J$ can be computed as follows:
\begin{equation*}
\begin{aligned}
	\frac{\partial J_j(q)}{\partial q} &= \frac{\partial \mathcal{L}(u,\lambda,q)}{\partial q}\Bigg|_{(u,\lambda) = (u^{\text{opt}},\lambda^{\text{opt}})}
	= (\frac{\partial \tilde{J}(u)}{\partial q} - \frac{\partial \operatorname{Re} \langle \lambda, F(u,q)\rangle}{\partial q})\Bigg|_{(u,\lambda) = (u^{\text{opt}},\lambda^{\text{opt}})},
\\
	& = - \operatorname{Re}((\frac{\partial F(u,q)}{\partial q} )^*\lambda) \Bigg|_{(u,\lambda) = (u^{\text{opt}},\lambda^{\text{opt}})}
	=\operatorname{Re} (( \frac{\partial \mathcal{S}(q)(qu_j^i)}{\partial q})^* \lambda^{\text{opt}}).
\end{aligned}
\end{equation*}

Furthermore, the Fr\'{e}chet derivative can be computed by
the following variational derivation
\begin{equation*}
\begin{aligned}
    & \quad\mathcal{S}(q+\delta q)( (q + \delta q)) u_j^i) - \mathcal{S}(q)( q u_j^i) \\
& = \mathcal{S}(q+\delta q)(\delta q u_j^i) + \mathcal{S}(q+\delta q)( q u_j^i) - \mathcal{S}(q)( q u_j^i)\\
& = \mathcal{S}(q+\delta q)(\delta q u_j^i)+ \mathcal{S}(q)(\delta q \mathcal{S}(q+\delta q)( q u_j^i)),\\
&\Rightarrow \frac{\partial \mathcal{S}(q)(qu_j^i)}{\partial q} \delta q = \mathcal{S}(q)((u_j^i+\mathcal{S}(q)(qu_j^i)) \delta q),
\end{aligned}    
\end{equation*}
taking the above expression back to the expression for the gradient, we can get
\begin{equation*}
\begin{aligned}
	\frac{\partial J_j(q)}{\partial q} &= \operatorname{Re} (( \mathcal{S}(q)((u_j^i+\mathcal{S}(q)(qu_j^i) )\cdot ))^* \lambda^{\text{opt}})\\
	&= \operatorname{Re} (( \overline{  u_j^i+\mathcal{S}(q)(qu_j^i)   })(\mathcal{S}^*(q)T^{*}(T \mathcal{S}(q)(qu_j^i)-\bm{d_j})),
\end{aligned}
\end{equation*}
where $(u_j^i+\mathcal{S}(q)(qu_j^i) )\cdot $ denotes the pointwise multiplication operator. Furthermore, the following theorem simplifies the process to compute the adjoint operator $\mathcal{S}^*(q)$.
\begin{theorem}
\label{theorem}
$\mathcal{S}^*(q) = \overline{\mathcal{S}(q)}$.
\end{theorem}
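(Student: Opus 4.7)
The plan is to unpack what the identity $\mathcal{S}^*(q)=\overline{\mathcal{S}(q)}$ really says at the level of inner products, then reduce the claim to a Green's identity computation with careful treatment of the boundary at infinity via the Sommerfeld radiation condition.

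First I would fix the interpretation. The operator $\mathcal{S}(q)$ sends a source $f\in L^2(\Omega)$ to the unique radiating solution $u$ of $\Delta u+k^2(1+q)u=-k^2 f$. With the inner product $\langle f,g\rangle=\int_\Omega f\bar g\,dx$, the identity $\mathcal{S}^*(q)=\overline{\mathcal{S}(q)}$ is equivalent to $\mathcal{S}^*(q)g=\overline{\mathcal{S}(q)\bar g}$ for every $g$, i.e.
\[
\int_\Omega \mathcal{S}(q)f\cdot \bar g\,dx \;=\; \int_\Omega f\cdot \mathcal{S}(q)\bar g\,dx \qquad \forall\, f,g\in L^2(\Omega).
\]
So the problem reduces to proving this symmetry identity. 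I introduce $u=\mathcal{S}(q)f$ and $w=\mathcal{S}(q)\bar g$, so both $u$ and $w$ are radiating solutions of Helmholtz equations with the same self-adjoint coefficient $1+q$, but with right-hand sides $-k^2 f$ and $-k^2\bar g$ respectively.

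Next I would rewrite both sides using the PDEs and apply Green's second identity on a large ball $B_R\supset\Omega$:
\[
\int_{B_R}\bigl(u\,\Delta w - w\,\Delta u\bigr)\,dx \;=\; \int_{\partial B_R}\bigl(u\,\partial_r w - w\,\partial_r u\bigr)\,dS.
\]
Since $q$ is compactly supported in $\Omega\subset B_R$, the coefficient term $k^2(1+q)uw$ is symmetric and cancels, so once the boundary term vanishes in the limit, substituting $\Delta u = -k^2(1+q)u - k^2 f$ and $\Delta w = -k^2(1+q)w - k^2\bar g$ immediately yields $\int u\,\bar g\,dx = \int f\,w\,dx$, which is exactly the desired symmetry.

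The main obstacle is the boundary term at infinity, which does not vanish pointwise but only after cancellation. Here I use the Sommerfeld condition in 2D: $\partial_r u=iku+o(r^{-1/2})$ and $\partial_r w=ikw+o(r^{-1/2})$, together with the far-field decay $u,w=O(r^{-1/2})$. Substituting these into the integrand gives
\[
u\,\partial_r w - w\,\partial_r u = ik\,uw - ik\,uw + o(r^{-1/2})\bigl(|u|+|w|\bigr) = o(r^{-1}),
\]
so integrating over $\partial B_R$ (of length $2\pi R$) produces $o(1)$ and the boundary contribution disappears as $R\to\infty$. Writing out the $o(\cdot)$ estimates rigorously (uniformly in direction, using the standard far-field expansion of radiating solutions) is the one technical step that needs care; everything else is bookkeeping with Green's identity and the PDEs satisfied by $u$ and $w$.
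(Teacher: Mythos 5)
Your proposal is correct, but it takes a genuinely different route from the paper. You prove the reciprocity identity $\int_\Omega (\mathcal{S}(q)f)\,\bar g\,dx=\int_\Omega f\,(\mathcal{S}(q)\bar g)\,dx$ at the PDE level: Green's second identity on a large ball $B_R$, cancellation of the symmetric coefficient term $k^2(1+q)uw$, and then killing the boundary term at infinity by combining the Sommerfeld radiation condition with the $O(r^{-1/2})$ far-field decay of radiating solutions (legitimate here because $q$, $f$, $g$ are compactly supported, so $u$ and $w$ solve the homogeneous constant-coefficient Helmholtz equation outside a compact set). The paper instead stays at the level of the Lippmann--Schwinger equation: it writes $u=\hat{\mathcal{S}}(qu)+\hat{\mathcal{S}}(f)$ and $v=\hat{\mathcal{S}}(qv)+\hat{\mathcal{S}}(\bar g)$ and derives the same symmetry purely from the symmetry of the free-space kernel, $G(x,y)=G(y,x)$, via explicit double integrals, with no limiting argument at infinity at all. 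Your version is the classical reciprocity proof: it is self-contained at the PDE level, does not need the explicit Green's function or the integral-equation reformulation, and makes transparent that only the symmetry of the coefficient (not realness of $q$) matters; its one technical cost is the uniform $o(\cdot)$ estimate on $\partial B_R$, which you correctly identify and which is standard. The paper's version buys algebraic simplicity — pure bookkeeping with a symmetric kernel — at the price of relying on the Lippmann--Schwinger representation already set up earlier in the text. Both arguments are sound and establish the theorem.
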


\begin{proof}
Given $ f,g \in C_0(\Omega)$, we can define the corresponding forward map
\begin{equation*}
	u = \mathcal{S}(q)f, \, v = \mathcal{S}(q)\overline{g}.
\end{equation*}

 Then by introducing the Hermitian inner product and substitute \eqref{LS}, we get
\begin{equation*}
\begin{aligned}
\langle \mathcal{S}(q)f,g \rangle 
&= \langle u,g\rangle 
= \langle \hat{\mathcal{S}}(qu)+\hat{\mathcal{S}}(f),g\rangle, \\
\langle f,\overline{\mathcal{S}(q)}g \rangle 
&= \langle f, \overline{v} \rangle 
= \langle f, \overline{\hat{\mathcal{S}}(qv)}+\overline{\hat{\mathcal{S}}(\overline{g})} \rangle ,
\end{aligned}
\end{equation*}
thereby, we have
\begin{equation*}
\begin{aligned}
\langle \hat{\mathcal{S}}(qu),g \rangle 
&= \langle \hat{\mathcal{S}}(\overline{g}),\overline{qu} \rangle= \langle v - \hat{\mathcal{S}}(qv) , \overline{qu} \rangle,\\
&= \int_{\Omega} q(x)u(x)v(x)  \, dx - \int_{\Omega} \int_{\Omega} G(x,y)q(x)v(x)q(y)u(y) \,dxdy\\
&= \langle u-\hat{\mathcal{S}}(qu),\overline{qv}\rangle = \langle \hat{\mathcal{S}}(f) ,\overline{qv} \rangle =\langle f, \overline{\hat{\mathcal{S}}(qv)}\rangle,\\
\langle \hat{\mathcal{S}}(f),g \rangle 
&= \int_{\Omega} \int_{\Omega} G(x,y) f(y) \overline{g(x)} \, dx dy 
= \langle f,\overline{\hat{\mathcal{S}}(\overline{g})} \rangle,
\end{aligned}
\end{equation*}
hence,
\begin{equation*}
	\langle \mathcal{S}(q)f,g \rangle = \langle f,\overline{\mathcal{S}(q)}g \rangle 
	\quad \Rightarrow \quad \mathcal{S}^*(q) = \overline{\mathcal{S}(q)}.
\end{equation*}
\end{proof}

As a result, we can summarize the procedure for computing the gradient as follows:

\begin{algorithm}

\caption{Adjoint state method}
\label{algorithm}
\begin{algorithmic}[1]

  \State \textbf{Input:} scatterer $q$, incident waves $u_1^i,u_2^i,\cdots,u_M^i$  
  \State \textbf{Output:} The gradient of $J$ with respect to $q$
  \For{$j = 1,2,\dots M$}
    \State $u_1 \gets \mathcal{S}(q) (qu_j^i),$
    \State $u_2 \gets \mathcal{S}(q) (\overline{T^* (T u_1 -\bm{d_j})}),$
    \State $\nabla_q (J_j) \gets \operatorname{Re} ( u_2 (u_j^i+u_1)),$ 
  \EndFor
  \State $\nabla_q (J) \gets \nabla_q (J_1) + \nabla_q (J_2) + \cdots \nabla_q (J_M).$
  \State \Return $\nabla_q (J)$

\end{algorithmic}
\label{gradient}
\end{algorithm}

\section{Implementation for forward problem}
\label{PML approach}

\begin{figure}[h!]
\centering
	\includegraphics[width=0.3\textwidth]{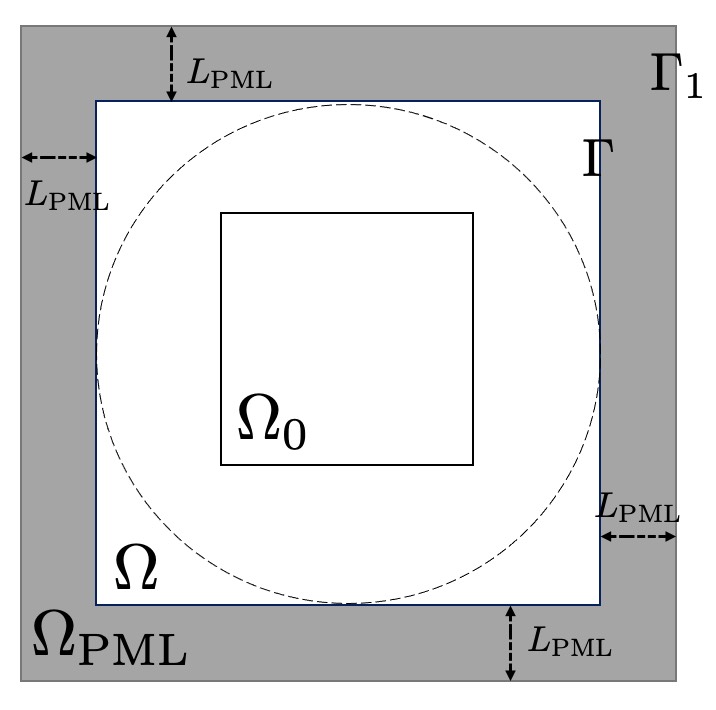}
	\caption{A schematic diagram illustrating the PML approach. $L_{\text{PML}}$ denotes the thickness of the gray PML region. $\Omega_0$ denotes the domain of interest, $\Omega$ denotes the area of computation.}
	\label{PML}
\end{figure}

As depicted in Figure \ref{PML}, the gray region, known as the Perfectly Matched Layer (PML), surrounds the physical domain and efficiently absorbs outgoing waves, thereby enabling the simulation of boundary absorption conditions at infinity within a finite area. We reference \cite{heikkola2003fast, kang2011inverse} to present the following heterogeneous Helmholtz equation:

\begin{equation}
\left\{
\begin{aligned}
	\frac{\partial}{\partial x}\left(\frac{e_y}{e_x} \frac{\partial u_{\text{PML}}}{\partial x}\right)+\frac{\partial}{\partial y}\left(\frac{e_x}{e_y} \frac{\partial u_{\text{PML}}}{\partial y}\right)+e_x e_y k^2(1+q) u_{\text{PML}} + k^2qu^i &=0,\; \text{in}\;\overline{\Omega}\cup \Omega_{\text{PML}} \\
	u_{\text{PML}} &= 0, \; \text{on} \;\Gamma_1,
\end{aligned}
\right. 
\label{PML_eq}
\end{equation}
where
$$
e_x:= \begin{cases} 1- ik \left(\frac{x}{L_{\text{PML}}}\right)^2, & -L_{\text{PML}}<x\leq0 \\ 1, & 0<x\leq 1\\
1- ik \left(\frac{x-1}{L_{\text{PML}}}\right)^2, & 1<x<1+L_{\text{PML}}.\end{cases}
$$
A similar definition applies to $e_y$, and $L_{\text{PML}}$ is set as 0.05. As a result, $u_{\text{PML}}$ provides an effective approximation for $u^s$ within the region $\Omega$.


\end{document}